\def\BibTeX{{\rm B\kern-.05em{\sc i\kern-.025em b}\kern-.08em
		T\kern-.1667em\lower.7ex\hbox{E}\kern-.125emX}}	
\newcommand\barbelow[1]{\stackunder[1.2pt]{$#1$}{\rule{.8ex}{.075ex}}}
\colorlet{yellow1}{yellow!15}
\colorlet{alg_box}{gray!15}
\newtcolorbox{highlightbox}[1]{
    colback=#1,
    boxrule=0pt,
    boxsep=0pt,
    arc=0pt,
    left=0pt,
    right=0pt,
    top=1pt,
    bottom=1pt,
    width=1.05\linewidth, % Adjust the width as per your needs
    height=1.3cm,  % Adjust the height as per your needs
}
\tikzstyle{startstop} = [rectangle, rounded corners, 
\tikzstyle{io} = [trapezium, 
\tikzstyle{process} = [rectangle, 
\tikzstyle{decision} = [diamond, 
\tikzstyle{arrow} = [thick,->,>=stealth]
\algrenewcommand\algorithmicindent{0.5em} % To decrease indent in algorithms
\begin{document}
	
	\title{\LARGE \bf Parallel Domain-Decomposition Algorithms for Complexity Certification of Branch-and-Bound Algorithms for Mixed-Integer Linear and Quadratic Programming} 
		%Exploiting Parallelization in Certifying the Computational complexity of Branch-and-Bound Methods for Mixed-Integer Linear Programming} %Exploiting Parallelization in Complexity Certification Framework of Branch-and-Bound Methods for Mixed-Integer Linear Programming} %Parallel Computing in Multi-Parametric Mixed-Integer Linear Programming} %Exact Complexity Certification of Branch-and-Bound Methods for Mixed-Integer Linear Programming 

\author{Shamisa Shoja*, Daniel Arnström**, Daniel Axehill*
	\thanks{*S. Shoja and D. Axehill are
		with the Division of Automatic Control, Department of
		Electrical Engineering, Linköping University, Sweden. Email: 
		{\tt\small \{shamisa.shoja,  daniel.axehill\}@liu.se}. **D. Arnström is with the Division of Systems and Control, Department of Information Technology, Uppsala University, Sweden. Email: 
		{\tt\small daniel.arnstrom@it.uu.se}.  \protect\newline
  This work was partially supported by the Wallenberg AI, Autonomous
Systems and Software Program (WASP) funded by the Knut and Alice
Wallenberg Foundation. The National Academic Infrastructure for Supercomputing in Sweden (NAISS) is also acknowledged for providing access to the Sigma  HPC cluster.} %supercomputer.} %awarding this project access to the LUMI supercomputer}
}

\maketitle
\thispagestyle{empty}
\pagestyle{empty}

%%%%%%%%%%%%%%%%%%%%%%%%%%%%%%%%%%%%%%%%%%%%%%%%%%%%%%%%%%%%%%%%%%%%%%%%%%%%%%%%%%%
	%%%%%%%%%%%%%%%%%%%%%%%%%%%%%%%%%%%%%%%%%%%%%%%%%%%%%%%%%%%%%%%%%%%%%%%%%%%%%%%%%%%
		\begin{abstract}
	          %\section{Abstract}
		When implementing model predictive control (MPC) for hybrid systems with a linear or a quadratic performance measure, a mixed-integer linear program (MILP) or a mixed-integer quadratic program (MIQP) needs to be solved, respectively, at each sampling instant. Recent work has introduced the possibility to certify the computational complexity of branch-and-bound (B\&B) algorithms when solving MILP and MIQP problems formulated as multi-parametric MILPs (mp-MILPs) and mp-MIQPs. Such a framework allows for computing the worst-case computational complexity of standard B\&B-based MILP and MIQP solvers, quantified by metrics such as the total number of LP/QP iterations and B\&B nodes. 
	These results are highly relevant for real-time hybrid MPC applications. 
	In this paper, we extend this framework by developing parallel, domain-decomposition versions of the previously proposed algorithm, allowing it to scale to larger problem sizes and enable the use of high-performance computing (HPC) resources. 
	Furthermore, to reduce peak memory consumption, we introduce two novel modifications to the existing (serial) complexity certification framework, integrating them into the proposed parallel algorithms. Numerical experiments show that the parallel algorithms significantly reduce computation time while maintaining the correctness of the original framework. 
		\end{abstract}
	
		%\begin{IEEEkeywords}
		% mixed-integer problems, multi-parametric programming, certification
		%\end{IEEEkeywords}
	
		\newtheorem{lemma}{Lemma}
		\newtheorem{corollary}{Corollary}
		\newtheorem{theorem}{Theorem}
		\newtheorem{assumption}{Assumption}
		\newtheorem{remark}{Remark}
		\newtheorem{definition}{Definition}	
		\newtheorem{properties}{Property}				
	%%%%%%%%%%%%%%%%%%%%%%%%%%%%%%%%%%%%%%%%%%%%
	%%%%%%%%%   %%%%%%%
%% ================================== %%
%%%%%%%%%  INTRODUCTION %%%%%%%
\section{Introduction}
To determine optimal control inputs in model predictive control (MPC), an optimization problem is solved at each time step. When MPC is applied to hybrid systems that contain both continuous and discrete %(real-valued) and discrete (logical) 
dynamics, the resulting optimization problems, depending on the chosen performance measure, can be formulated as mixed-integer linear programs (MILPs) (for $1$-norm or $\infty$-norm) or mixed-integer quadratic programs (MIQPs) (for $2$-norm)~\cite{dua2002multiparametric,borrelli2005dynamic,borrelli2017predictive}.  
In MPC applications, system states and reference signals are usually assumed to belong to a closed (in many applications polyhedral) set and are considered as parameters in the optimization problem. This allows for the formulation of these problems as multi-parametric MILP (mp-MILP) and mp-MIQP problems~\cite{bemporad2002model}. 
Computing the solutions to these optimization problems parametrically offers the advantage of shifting most of the online computational burden offline~\cite{dua2000algorithm}. However, storing pre-computed solutions might require a significant amount of memory and can become increasingly complex for high-dimensional problems, limiting its use on embedded hardware.
The traditional alternative to solving these optimization problems parametrically offline is to solve these problems online in real-time, which in the MPC application considered in this work means solving MILPs and MIQPs under real-time constraints. In such a setup, it is clearly of high interest to be able to obtain relevant bounds on the worst-case computational complexity. Such guarantees on the worst-case computational complexity for LPs, QPs, MILPs, and MIQPs have been provided in, e.g.,~\cite{zeilinger2011real}, ~\cite{cimini2017exact, arnstrom2021unifying}, ~\cite{shoja2022exact, shoja2023subopt}, and~\cite{axehill2010improved, shoja2022overall,shoja2025unifying}, respectively. %—
In particular, the unified certification method presented in~\cite{shoja2025unifying} 
focuses on determining the worst-case computational complexity
for solving MILP and MIQP instances to optimality that originate from a given mp-MILP or mp-MIQP and a corresponding polyhedral parameter set, when employing the standard branch-and-bound (\bnb) algorithm.  
This certification is performed %quantification is ascertained 
through the process of recursively partitioning the parameter space based on the solver state sequence. 
The complexity measures considered in~\cite{shoja2022overall,shoja2022exact,shoja2025unifying} are the number of linear systems of equations that are solved in the subsolver of \bnb (the \textit{iteration} number) and the number of relaxations (nodes) required to be solved in \bnb. 
Nevertheless, this method might require a considerable amount of processing time and memory, particularly as the size of the problem increases.

The primary contribution of this work is proposing complexity certification algorithms that can be executed in parallel, with the aim to distribute both the computational load and memory requirements across a (potentially large) number of workers~\cite{constantinides2009tutorial, bertsekas2015parallel, ralphs2016parallel}. 
In particular, our objective is to introduce parallel versions of our previously presented complexity certification framework, which will open up for the consideration of larger problem sizes and the certification of more challenging problems. This not only improves computation time but also enables the use of high-performance computing (HPC) to further expand the range of problem sizes that can be certified within a reasonable timeframe. 

When certifying larger problem instances in this work, a significant increase in peak memory consumption was observed. To address this, we propose two key modifications to the (serial) complexity certification framework presented in~\cite{shoja2025unifying}. These modifications are then incorporated into the proposed parallel algorithms to reduce peak memory usage.
To summarize, the main contributions of this paper are:  
\begin{itemize}  
    \item Two parallel algorithms for certifying the complexity of (serial) B\&B-based MILP and MIQP solvers.  
    \item Two strategies for reducing peak memory consumption.  
\end{itemize}

%% ================================== %%
%%%%%%%%%  PROBLEM FORMULATION %%%%%%%
\section{Problem formulation}  \label{sec:probform_par}
%%%%%%%%%%%%%%%%%%%% mp-MILP %%%%%%%%%%%%%%%%%%%%%%%
In this work, we consider mp-MILP problems of the form

\begin{subequations} \label{eq:mpMILP_par}
	\begin{align}
		%J_{\text{mpMILP}} = 
		\min_{x} \quad &  c^T x \label{eq:mpMILP_par1} \\ %\frac{1}{2} x^{T}Hx	%(f + f_{\theta } \theta)	
		\mathcal{P}_{\text{mpMILP}}(\theta): \hspace{.2cm} \textrm{s.t.} \quad & Ax \leqslant  b(\theta), %b + W \theta, 
        \label{eq:mpMILP_par2} \\
		& x_i \in  \{0,1\}, \hspace{.2cm} \forall i \in \mathcal{B}, \label{eq:mpMILP_par3}
	\end{align}
\end{subequations}
%%%%%%%%%%%%%%%%%%%%%%%%%%%%%%%%%%%%%%%%%%%%%%%%%%%%%%%%%%
where  $x \in \mathbb{R}^{n_c} \times \{0, 1\}^{n_b}$ is the decision variable vector with $n = n_c + n_b$ components, including $n_c$ continuous and $n_b$ binary components, and $\theta$ is the parameter vector  belonging to the polyhedral set $\Theta_0 \subset \mathbb{R}^{n_{\theta}}$.
The objective function in~\eqref{eq:mpMILP_par} is defined by $c \in \mathbb{R}^n$, and the feasible set is characterized by the matrix $A \in \mathbb{R}^{m \times n}$ and the vector $b(\theta)$, which is an affine function of $\theta$,  given by $b(\theta) = b + W \theta$, where $b \in \mathbb{R}^{m}$ and $W \in \mathbb{R}^{m \times n_{\theta}}$. %$A \in \mathbb{R}^{m \times n}$, $b \in \mathbb{R}^{m}$, and $W \in \mathbb{R}^{m \times n_{\theta}}$. 
Moreover, $\mathcal{B}$ represents the index set of the binary decision variables.
The problem \eqref{eq:mpMILP_par} is non-convex and is known to be $\mathcal{NP}$-hard \cite{wolsey2020integer}. 

A convex relaxation of~\eqref{eq:mpMILP_par} can be derived by relaxing the binary constraints~\eqref{eq:mpMILP_par3} as follows 
%%%%%%%%%%%%%%%%%%%% mp-LP %%%%%%%%%%%%%%%%%%%%%%%	
\begin{subequations} \label{eq:mpLP_par}
	\begin{align}
		% J_{\text{mpLP}} =  
		%\vspace{.1cm} 
		\min_{x} \quad & c^T x \label{eq:mpLP_par1} \\ %(f + f_{\theta } \theta)		
		\mathcal{P}_{\text{mpLP}}(\theta): \hspace{.2cm} \textrm{s.t.} \quad & Ax \leqslant  b(\theta), %b + W \theta, 
        \label{eq:mpLP_par2} \\
		& 0 \leqslant x_i  \leqslant 1, \hspace{.2cm} \forall i \in \mathcal{B}, \label{eq:mpLP_par3}\\
		& x_i = 0, \hspace{.08cm} \forall i \in \mathcal{B}_0, \hspace{.1cm} x_i = 1, \hspace{.08cm} \forall i \in \mathcal{B}_1, \label{eq:mpLP_par4}
	\end{align}
\end{subequations}
which is in the form of an mp-LP problem. Here, $\mathcal{B}_0 \subseteq \mathcal{B}$ and $\mathcal{B}_1 \subseteq \mathcal{B}$ 
represent the indices of binary variables fixed to $0$ and $1$, respectively, with $\mathcal{B}_0 \cap \mathcal{B}_1 = \emptyset$. 

For a fixed parameter vector $\bar{\theta} \in \Theta_0$, the problem \eqref{eq:mpMILP_par} is simplified to the (non-parametric) MILP problem $\mathcal{P}_{\text{MILP}}(\bar{\theta})$ with $b(\bar{\theta}) = b + W \bar{\theta}$.  %vector of coefficients for the right-hand side inequality constraint becomes $\bar{b} = b + W \bar{\theta}$. 
An LP relaxation $\mathcal{P}_{\text{LP}}(\bar{\theta})$ of the MILP problem 
$\mathcal{P}_{\text{MILP}}(\bar{\theta})$ can be obtained from~\eqref{eq:mpLP_par} by fixing $\theta$ to $\bar{\theta}$. %relaxing the binary constraints \eqref{eq:mpMILP_par3} to \eqref{eq:mpLP_par3}.

Additionally, we consider mp-MIQP problems of the form %The mp-MIQP problems considered in this work are of the form

\begin{subequations} \label{eq:mpMIQP_par}		
	\begin{align}
		\vspace{.1cm} 
		\min_{x} \quad & \frac{1}{2} x^{T}Hx + f(\theta)^T x %f^T x + \theta ^{T} f_{\theta }^{T} x, 
        \label{eq:mpMIQP_par1} \\		
		\mathcal{P}_{\text{mpMIQP}}(\theta): \hspace{.3cm} \textrm{s.t.} \quad & Ax \leqslant  b(\theta), %b + W \theta, 
        \label{eq:mpMIQP_par2} \\
		& x_i \in  \{0,1\}, \hspace{.3cm} \forall i \in \mathcal{B}, \label{eq:mpMIQP_par3}
	\end{align}
\end{subequations}
%%%%%%%%%%%%%%%%%%%%%%%%%%%%%%%%%%%%%%%%%%%%%%%%%%%%%%%%%%
where $H \in \mathbb{S}_{+}^n$ and $f(\theta)$ is an affine function of $\theta$,  given by $f(\theta) = f + f_{\theta} \theta$, with $f \in \mathbb{R}^{n}$ and $f_{\theta} \in \mathbb{R}^{n \times n_{\theta}}$. %$f \in \mathbb{R}^n$, $f_{\theta} \in \mathbb{R}^{n \times n_{\theta}}$, 
The decision variables, parameter vector, and feasible set are defined similarly to those in~\eqref{eq:mpMILP_par}. 

A convex relaxation $\mathcal{P}_{\text{mpQP}}(\theta)$ of~\eqref{eq:mpMIQP_par} can similarly be derived by relaxing the binary constraints~\eqref{eq:mpMIQP_par3} into constraints~\eqref{eq:mpLP_par3}. Furthermore, for a fixed parameter $\bar{\theta} \in \Theta_0$, the (non-parametric) MIQP problem $\mathcal{P}_{\text{MIQP}}(\bar{\theta})$ is obtained from~\eqref{eq:mpMIQP_par} with $b(\bar{\theta}) = b + W \bar{\theta}$ and $f(\bar{\theta}) = f + f_{\theta} \bar{\theta}$. %, where the right-hand side inequality constraint changes to $\bar{b} = b + W \bar{\theta}$ and the vector of the linear term in the objective function adjusts to $\bar{f} = f + f_{\theta} \bar{\theta}$.

A commonly used technique to compute the solutions of MILPs and MIQPs is branch and bound (\bnb)~\cite{wolsey2020integer}. 
This method explores a binary search tree in search for a possible optimal solution, where each \textit{node} within the tree represents a convex (LP/QP) relaxation. The relaxations can be solved using an LP/QP solver, such as the simplex or active-set method~\cite{nocedal2006numerical}.  
An important feature of \bnb is that the results from the solutions to the relaxations can be used to prove that parts of the tree can be pruned from explicit exploration. %, known as \textit{cut}. 
Such pruning can be done if one of the following conditions is satisfied in a node~\cite{wolsey2020integer}:
\begin{enumerate}[leftmargin=*]
	\item The objective function value of the relaxation (the lower bound), denoted as $\barbelow{J}$, is greater than the objective function value of the best-known integer-feasible solution so far (the upper bound), denoted as $\bar{J}$. In other words, $\barbelow{J} \geq \bar{J}$, which is referred to as the \textit{dominance cut}. The \textit{infeasibility cut} is a special case of this, where $\barbelow{J} = \infty$.  
	\item The solution to the relaxation, denoted $\barbelow{x}$, is integer feasible, which is referred to as the \textit{integer-feasibility cut}. 
\end{enumerate}

If none of these conditions holds in a node, the tree is branched at that node by fixing a relaxed binary variable, 
which generates two new nodes (child nodes to the parent node). In this paper, we denote a \bnb node representing a relaxation %(e.g., problem~\eqref{eq:mpLP_par}) 
by $\eta \triangleq \left ( \mathcal{B}_0, \mathcal{B}_1\right )$,  
where $\mathcal{B}_0$ and $\mathcal{B}_1$ are defined in \eqref{eq:mpLP_par4}. The first node, i.e., the \textit{root node}, is thus denoted $\eta_0 =\left (\emptyset, \emptyset \right )$, where all the binary variables have been relaxed.
For a detailed description of the \bnb method, see, e.g.,~\cite{wolsey2020integer}. 

Throughout this paper, $\mathbb{N}_{0}$ denotes the set of nonnegative integers, $\mathbb{N}_{1:N}$ represents the finite set $\{1, \dots, N\}$, and $\{\mathcal{C}^i\}_{i=1}^{N}$ denotes a finite collection $\{\mathcal{C}^1, \dots, \mathcal{C}^N\}$ of $N$ elements. When $N$ is unimportant, we use  $\{\mathcal{C}^i\}_{i}$ instead. %the notation

%Throughout this paper, $\mathbb{N}_{0}$ denotes the set of nonnegative integers, while $\mathbb{N}_{1:N}$ represents the finite set $\{1, \dots, N\}$. The notation $\{\mathcal{D}^i\}_{i=1}^{N}$ refers to a finite collection $\{\mathcal{D}^1, \dots, \mathcal{D}^N\}$ of $N$ elements. When $N$ is unimportant, we use the notation $\{\mathcal{D}^i\}_{i}$ instead.

%% ================================== %%
%%%%%%%%% SERIAL COMPLEXITY   %%%%%%%
\section{Serial complexity certification framework}
\label{sec:cert_ser_par} 
This section briefly reviews the (serial) complexity certification framework for standard B\&B algorithms for MILPs and MIQPs presented in~\cite{shoja2025unifying}, referred to as the \textsc{B\&BCert} algorithm.  
The framework takes an mp-MILP~\eqref{eq:mpMILP_par} or mp-MIQP~\eqref{eq:mpMIQP_par} along with the parameter set \( \Theta_0 \) as inputs and partitions \( \Theta_0 \) based on the computational effort required to reach optimality.  
Specifically, \textsc{B\&BCert} iteratively decomposes \( \Theta_0 \) into regions where the parameters produce the same solver state sequence. The complexity measure, denoted by \( \kappa \), quantifies the number of linear systems of equations solved in relaxations (iterations) or the number of B\&B nodes explored. Moreover, as shown in~\cite{shoja2025unifying}, the framework can be extended to include other relevant complexity measures, such as the number of floating-point operations (flops).

For clarity and completeness, a version of the \textsc{B\&BCert} certification algorithm is provided in Algorithm~\ref{alg:cert_ser}. This algorithm maintains two lists: \( \mathcal{S} \), which stores regions that have not yet terminated, and \( \mathcal{F} \), which stores the terminated regions. Each tuple $\left(\Theta, \mathcal{T}, \kappa_{\text{tot}}, \bar{J}\right)$ in $\mathcal{S}$, denoted  \texttt{reg} for short, consists of the following elements:  
\begin{itemize}
    \item \( \Theta \): the corresponding parameter set.  
    \item \( \mathcal{T} = \{\eta_i\}_i \): a sorted list of \textit{pending} nodes, corresponding to the local \bnb tree within the region.  
    \item \( \kappa_{\text{tot}} \): the accumulated complexity measure obtained for the region to reach its current state.  
    \item \( \bar{J}(\theta) \): the best-known integer feasible solution (i.e., the upper bound) for all \( \theta \in \Theta \).  
\end{itemize}
%is defined by the tuple $\left(\Theta, \mathcal{T}, \kappa_{\text{tot}}, \bar{J}\right)$, where contains: $\Theta$ represents the parameter set, $\mathcal{T} = \{\eta_i\}_i$ is an ordered list of \textit{pending} nodes $\eta_i$ (corresponding to the local \bnb tree within the region), $\kappa_{\text{tot}}$ is the accumulated complexity measure (e.g., iterations) obtained for the region to reach its current state, and $\bar{J}(\theta)$ is the best-known integer feasible solution (i.e., the upper bound) for all $\theta \in \Theta$. 
The initial tuple ($\texttt{reg}^0$) passed to Algorithm~\ref{alg:cert_ser} is $(\Theta^0, \mathcal{T}^0, \kappa_{{tot}}^0, \bar{J}^0 )= (\Theta_0, \{(\emptyset, \emptyset)\}, 0, \infty)$, which contains the entire parameter set $\Theta_0$ and the root node. The output of the algorithm is a partition of $\Theta_0$ along with the corresponding complexity measure. 

%%%%%%%%%%%%%%%%%%%%%
\begin{algorithm}[H]
	\caption{\textsc{B\&BCert}: (serial) B\&B complexity certification algorithm~\cite{shoja2025unifying}} % of B\&B-based MIQP/MILP solver}
\label{alg:cert_ser}	
\begin{algorithmic}[1] 
	\Require \Longunderstack[l]%{mp problem and $ \Theta_0$} 
	{$ \texttt{reg}^0 = (\Theta^0, \mathcal{T}^0, \kappa_{{tot}}^0, \bar{J}^0 )$}
	\vspace{.03cm} 
	\Ensure %FinalPartition 
	Final partition $\mathcal{F}$ %$=\{(\Theta^i,\kappa_{tot}^i)\}_{i}$ 
	\vspace{.03cm} 
	\State {$\mathcal{F} \leftarrow \emptyset$}
	\State {Push $ \texttt{reg}^0$ %$( \Theta_0, \{ (\emptyset,\emptyset)\}, 0, \infty)$ 
		to $\mathcal{S}$} 
	\label{step:init_S_par} 
	\vspace{.05cm}%%%%%%%%%%%%%%%%%%%%%%%%%%%%%%%% 
	%\vspace{.1cm}
	\While {$\mathcal{S} \neq \emptyset$} \label{step:alg_ser_while}
	\State Pop \texttt{reg} = $( \Theta, \mathcal{T}, \kappa_{{tot}}, \bar{J})$ from $\mathcal{S}$ %, \barbelow{J}, \text{state}, \eta^c
	\label{step:alg_ser_pop}
	\If{$\mathcal{T}=\emptyset$} \label{step:alg_ser_T} %the list of \\ nodes ($\mathcal{T}$) for \texttt{reg}  is empty} %$p_c$ \Longunderstack[l] %state for \texttt{reg} is "Fin" and \label{step:alg_ser_T} %$\mathcal{T}=\emptyset$
%\vspace{-.2cm}
\State{Add \texttt{reg} to $\mathcal{F}$} %FinalPartition} 
\label{step:alg_ser_pushF}
\Else
\State{Pop new node $\eta$ %\texttt{node}= $(\mathcal{B}_0,\mathcal{B}_1)$ 
from $\mathcal{T}$} \label{step:alg_ser_pop_node}
\State $\{(\Theta^j, \kappa^j, \barbelow{J}^j)\}_{j=1}^N \leftarrow \textsc{solveCert}(\eta, \Theta)$ \label{step:alg_ser_cert}
\For {$j \in \mathbb{N}_{1:N} $} \label{step:alg_ser_for} %$\{1, \dots, N\}$}
%\State $\kappa_{tot}^j = \kappa_{tot} + \kappa^j$
\State $\texttt{reg}^j \leftarrow ( \Theta^j, \mathcal{T}, \kappa^j + \kappa_{tot}, \bar{J})$ \label{step:alg_ser_regg}
\State $\mathcal{S} \leftarrow \textsc{cutCert}$ $(\texttt{reg}^j, \barbelow{J}^j, \eta, \mathcal{S})$ \label{step:alg_ser_cut} %(\Theta^j,  \kappa^j + \kappa_{tot}, \barbelow{J}^j, \bar{J}, \eta, \mathcal{T}, \mathcal{S})$ %
\EndFor
\label{step:pop_ser_cert} 
\EndIf
\EndWhile
\State \textbf{Return} $\mathcal{F}$
%%%%%%%%%%%%%%%%%%%%%%%%%%%%%%		%%%%%%%%%%%%%%%%%%%%%%%%%
\end{algorithmic}
\end{algorithm}

Algorithm~\ref{alg:cert_ser} follows these main steps for a selected \texttt{reg} at Step~\ref{step:alg_ser_pop}:
\begin{itemize}[leftmargin=*]
    \item If no pending nodes remain in \( \mathcal{T} \), \texttt{reg} is added to the final partition $\mathcal{F}$ (Step~\ref{step:alg_ser_pushF}).
    \item Otherwise, a node is selected from \( \mathcal{T} \) (Step~\ref{step:alg_ser_pop_node}), and its corresponding (LP/QP) relaxation is certified using the \textsc{solveCert} subroutine (Step~\ref{step:alg_ser_cert}). This subroutine partitions the parameter set \( \Theta \), associating each resulting subregion \( \Theta^j \) with the optimal value fuction \( \barbelow{J}^j(\theta) \) (lower bound) and the required complexity \( \kappa^j \). An example of \textsc{solveCert} is provided in~\cite{arnstrom2021unifying}.
    \item The region is then decomposed into $N$ subregions (Step~\ref{step:alg_ser_regg}). For each subregion \( \Theta^j \), the parameter-dependent \bnb cut conditions (see Section~\ref{sec:probform_par}, Items 1-3) are evaluated using the \textsc{cutCert} procedure. More details on this procedure can be found in~\cite[Algorithm~8]{shoja2025unifying}.
\end{itemize}

%These steps repeat until all regions in \( \mathcal{S} \) are processed.

Throughout this paper, we assume that the \textsc{solveCert} function satisfies the properties outlined in Assumption~1 of~\cite{shoja2025unifying} (which hold for certain subsolvers based on results for mp-LPs and mp-QPs), summarized below. %The polyhedral condition in parentheses holds if the inputted \( \Theta \) is polyhedral.

\begin{assumption} \label{ass:assump_par}
The \textsc{solveCert} function partitions a parameter set \( \Theta \) into a finite collection of  subsets \( \{\Theta^j\}_j \). The value function \( \barbelow{J}(\theta) \) is  piecewise affine (PWA) for mp-LPs and piecewise quadratic (PWQ) for mp-QPs. Additionally, the complexity measure \( \kappa(\theta) \) is piecewise constant (PWC). Moreover, the complexity measure obtained from \textsc{solveCert} is assumed to coincide with that of the corresponding online solver for any \( \theta \in \Theta_0 \). 
%The \textsc{solveCert} function partitions a (polyhedral) parameter set \( \Theta \) into a finite collection of (polyhedral) subregions \( \{\Theta^j\}_j \). The value function \( \barbelow{J}(\theta) \) is (polyhedral) piecewise affine (PWA) for MILPs and (polyhedral) piecewise quadratic (PWQ) for MIQPs. Additionally, the complexity measure \( \kappa(\theta) \) is (polyhedral) piecewise constant (PWC). Moreover, the complexity measure obtained from \textsc{solveCert} is assumed to coincide with that of the corresponding online solver for any \( \theta \in \Theta_0 \).
\end{assumption}

%\begin{remark} \label{rem:assump_par} Throughout this paper, we assume that the \textsc{solveCert} function satisfies the properties outlined in Assumption~1 of~\cite{shoja2025unifying} (which hold for some inner solvers based on the works for mp-LPs and mp-QPs). Specifically, \textsc{solveCert} partitions a (polyhedral) parameter set \( \Theta \) into a finite collection of (polyhedral) subregions \( \{\Theta^j\}_j \), while the value function \( \barbelow{J}(\theta) \) remains (polyhedral) piecewise affine/quadratic (PPWA/PPWQ), and the complexity measure \( \kappa(\theta) \) is (polyhedral) piecewise constant (PPWC). Moreover, the complexity measure obtained from \textsc{solveCert} is assumed to coincide with that of the corresponding online solver for any \( \theta \in \Theta_0 \). \end{remark}

%% ================================== %%
%%%%%%%%% PARALLEL CERTIFICATION FRAMEWORK  %%%%%%%
\section{Parallel complexity certification framework}
\label{sec:cert_par}
In \textsc{B\&BCert}, the entire set \( \Theta_0 \) is processed sequentially, iteratively decomposing it into subregions \( \Theta \subseteq \Theta_0 \) while exploring all pending nodes in the B\&B search tree until an optimal solution is found or infeasibility is detected $\forall \theta \in \Theta$. In this section, we extend this approach to enable parallel execution of the certification algorithm for mp-MILPs and mp-MIQPs in~\eqref{eq:mpMILP_par} and~\eqref{eq:mpMIQP_par}, respectively.  
The following subsections introduce two parallel versions of \textsc{B\&BCert} based on domain decomposition, which distributes tasks across available computational resources, referred to as \textit{workers}. The main processor responsible for this distribution is referred to as the \textit{master}~\cite{ralphs2016parallel}. 

\subsection{Static domain decomposition} \label{sec:paral_dom}

An approach to parallelizing the certification framework %\textsc{B\&BCert} 
is to divide the parameter set \( \Theta_0 \) into (artificial) subsets, for example, by partitioning it into equally sized boxes. These artificial regions are then distributed among available processors (workers), with each worker independently applying the \textsc{B\&BCert} algorithm to its assigned region.  

The \textsc{B\&BCertStat} procedure, detailed in Algorithm~\ref{alg:cert_par_dom}, implements this static domain-decomposition strategy. The worker pool, denoted by \( \mathcal{W} \), represents the collection of processors executing tasks concurrently.  Starting with the initial tuple \( \texttt{reg}^0 = (\Theta^0, \mathcal{T}^0, \kappa_{{tot}}^0, \bar{J}^0 )=(\Theta_0, \{(\emptyset, \emptyset)\}, 0, \infty) \) (as in Algorithm~\ref{alg:cert_ser}), the algorithm partitions \( \Theta_0 \) into artificial regions $\{\Theta^k\}_{k=1}^{n_p}$ (Step~\ref{step:alg_dom_split}) and distributes these regions across workers. Each worker \( w \in \mathcal{W} \) independently executes \textsc{B\&BCert} on its assigned region (Step~\ref{step:alg_cert_par_dom_wor}). Finally, the results from completed workers (including the partitioned parameter set and corresponding complexity measures) are aggregated into the final list \( \mathcal{F}_s \) (Step~\ref{step:alg_dom_pushF}). 
The algorithm terminates when all workers have completed their assigned tasks.

\begin{algorithm}[htbp] 
\caption{\textsc{B\&BCertStat}: Static parallel domain-decomposition complexity certification algorithm}
\label{alg:cert_par_dom}
\begin{algorithmic}[1]
\Require \Longunderstack[l]%{mp problem and $ \Theta_0$} 
{$\texttt{reg}^0 = (\Theta^0, \mathcal{T}^0, \kappa_{{tot}}^0, \bar{J}^0 )$}
\vspace{.03cm} %Initial region 
\Ensure Final partition $\mathcal{F}_s$ %$= \{(\Theta^i, \kappa_{\text{tot}}^i)\}_{i}$
\vspace{.03cm} 
\State $\mathcal{F}_s \leftarrow \emptyset$
\State Initialize worker pool $\mathcal{W}$
\State Partition $\Theta^0$ into $n_p$ parts $\{\Theta^k\}_{k=1}^{n_p}$ \label{step:alg_dom_split}
%\State Distribute $\texttt{reg}^k = \left(\Theta^k, \mathcal{T}^0, \kappa_{\text{tot}}^0, \bar{J}^0\right)$ to idle workers in $\mathcal{W}, \forall k \in \mathbb{N}_{1:n_p}$ \label{step:alg_dom_dist}
\ForAll {$k \in \mathbb{N}_{1:n_p}$ \textbf{in parallel}}
\State  $\texttt{reg}^k \leftarrow \left(\Theta^k, \mathcal{T}^0, \kappa_{\text{tot}}^0, \bar{J}^0\right)$ %to idle workers in $\mathcal{W}, \forall k \in \mathbb{N}_{1:n_p}$
\label{step:alg_dom_dist}
\State $\mathcal{F}^k \leftarrow$ \textsc{B\&BCert}$(\texttt{reg}^k)$ \label{step:alg_cert_par_dom_wor} %Worker $k$
%\State Idle $w \in \mathcal{W}$ executes  \textsc{B\&BCert}$(\texttt{reg}^k)$ \label{step:alg_cert_par_dom_wor}%Worker $k$
%\State Collect the results $\mathcal{F}^k$ from worker $k$ and add to $\mathcal{F}_s$ \label{step:alg_dom_pushF}
\EndFor
\State Append $\mathcal{F}^k$ for all $ k$ to $\mathcal{F}_s$ 
%\State Collect results from completed workers and add to $\mathcal{F}_s$ 
\label{step:alg_dom_pushF}
\State \textbf{Return} $\mathcal{F}_s$ 
\end{algorithmic}
\end{algorithm}

Algorithm~\ref{alg:cert_par_dom} follows the \textit{master-worker} paradigm~\cite{ralphs2016parallel}, where the master partitions the initial parameter set, %(Step~\ref{step:alg_dom_split}),
assigns regions to workers,  %(Step~\ref{step:alg_dom_dist}), 
and collects the results. %(Step~\ref{step:alg_dom_pushF}).
%While conceptually straightforward, this approach does not exploit the inherent structure of the problem. The introduction of initial artificial regions can result in redundant computations, potentially reducing overall efficiency.
While conceptually straightforward, this approach does not exploit the problem's inherent structure. It introduces initial artificial regions, potentially leading to redundant computations and reduced overall efficiency. 
%Although straightforward, this approach does not fully leverage the problem's inherent structure. Consequently, it may introduce unnecessary artificial regions, leading to redundant computations and potentially reducing the overall efficiency of the algorithm. 

\subsection{Dynamic domain decomposition}  
\label{sec:paral_dist}

This section introduces a parallel method that dynamically distributes computational tasks among available resources while leveraging the problem’s solution structure.  
Unlike static decomposition, this approach assigns tasks to workers based on the relaxation’s solution structure, aiming to improve resource utilization efficiency and minimize redundant computations. 

The \textsc{B\&BCertDyn} procedure, detailed in Algorithm~\ref{alg:cert_par}, implements this dynamic domain-decomposition strategy. % for parallelizing the complexity certification of \bnb-based MILP and MIQP solvers.  
Similar to Algorithm~\ref{alg:cert_ser}, it maintains two lists: $\mathcal{S}_d$ for candidate regions and $\mathcal{F}_d$ for terminated regions. The algorithm takes as input an initial tuple  
$\texttt{reg}^0 = (\Theta^0, \mathcal{T}^0, \kappa_{\text{tot}}^0, \bar{J}^0)$  
and a user-defined parameter $r \in [0,1]$ that determines the distribution ratio. The output is a partition of $\Theta^0$ along with the corresponding complexity measures.  
Upon its initial invocation, the algorithm starts with  
$\texttt{reg}^0 = (\Theta^0, \mathcal{T}^0, \kappa_{\text{tot}}^0, \bar{J}^0) = (\Theta_0, \{(\emptyset, \emptyset)\}, 0, \infty)$ (as in Algorithm~\ref{alg:cert_ser}). The same underlying \textsc{solveCert} and \textsc{cutCert} functions are used in this algorithm as in Algorithm~\ref{alg:cert_ser}. %Additionally, the algorithm employs 

\begin{algorithm}[htbp]
\caption{\textsc{B\&BCertDyn}: Dynamic parallel domain-decomposition complexity certification algorithm}
\label{alg:cert_par}
\begin{algorithmic}[1]
\Require $\texttt{reg}^0 = (\Theta^0, \mathcal{T}^0, \kappa_{\text{tot}}^0, \bar{J}^0)$, ratio $r$ %Initial region 
\Ensure Final partition $\mathcal{F}_d$
\vspace{.03cm}
\State $\mathcal{F}_d \leftarrow \emptyset$ %\Comment{Initialize final partition}
\State Push $\texttt{reg}^0$ to $\mathcal{S}_d$
\State Initialize worker pool $\mathcal{W}$
\While{$\mathcal{S}_d \neq \emptyset$} %\textbf{or} there is an active $w \in \mathcal{W}$} 
\label{step:alg_par_while}
%\If{$\mathcal{S}_d \neq \emptyset$}
\State Pop $\texttt{reg} = (\Theta, \mathcal{T}, \kappa_{\text{tot}}, \bar{J})$ from $\mathcal{S}_d$ \label{step:alg_par_pop}
\If{$\mathcal{T} = \emptyset$} \label{step:alg_par_T}
\State Add $\texttt{reg}$ to $\mathcal{F}_d$
\Else
\State Pop new node $\eta$ from $\mathcal{T}$
\State $\{(\Theta^j, \barbelow{J}^j, \kappa^j)\}_{j=1}^N \leftarrow \textsc{solveCert}(\eta, \Theta)$
\For{$j \in \mathbb{N}_{1:N}$}
\State $\texttt{reg}^j \leftarrow (\Theta^j, \mathcal{T}, \kappa^j + \kappa_{\text{tot}}, \barbelow{J}^j)$
\State $\mathcal{S}_d \leftarrow \textsc{cutCert}(\texttt{reg}^j, \barbelow{J}^j, \eta, \mathcal{S}_d)$ \label{step:alg_par_cut}
\EndFor
\State $\mathcal{S}_{w} \leftarrow$ Pop $\lceil r \  |\mathcal{S}_d| \rceil$  tuples from $\mathcal{S}_d$ \label{step:alg_par_dist} %select
%\State Distribute regions $\texttt{reg}^k \in \mathcal{S}_{w}$ to idle workers in $\mathcal{W}$ \label{step:alg_par_distribute}
\ForAll{$k \in \mathbb{N}_{1:|\mathcal{S}_{w}|}$ \textbf{in parallel}} \label{step:cert_par_for1}
\State Pop $\texttt{reg}^k$ from $\mathcal{S}_w$ \label{step:pop_reg_dist}
\State \Longunderstack[l]{ $\mathcal{F}^k  \leftarrow$ \textsc{B\&BCertDyn}$(\texttt{reg}^k,r)$}\label{step:cert_par_wor}
%\State \Longunderstack[l]{Idle $w\in\mathcal{W}$ executes \textsc{B\&BCertDyn}$(\texttt{reg}^k,r)$}\label{step:cert_par_wor}
\EndFor
\EndIf
%\EndIf
\State Append $\mathcal{F}^k$ for all $ k$ to $\mathcal{F}_d$ 
%\State \Longunderstack[l]{Collect results from completed workers and add to $\mathcal{F}_d$}
\label{step:alg_par_pushF}
\EndWhile
\State \textbf{Return} $\mathcal{F}_d$
\end{algorithmic}
\end{algorithm}

%%%%%%%%%%%%%%%%%%%%%%%%  
Algorithm~\ref{alg:cert_par} leverages the fact that each tuple in $\mathcal{S}_d$ is processed independently, allowing the distribution of regions across multiple independent workers, as implemented in  Steps~\ref{step:pop_reg_dist}--\ref{step:cert_par_wor}. %{step:alg_par_dist}--\ref{step:alg_par_pushF}.  
The distribution ratio $r \in [0,1]$ at Step~\ref{step:alg_par_dist} specifies the proportion of regions in $\mathcal{S}_d$ to be assigned to workers, where $|\mathcal{S}_d|$ denotes the number of tuples in $\mathcal{S}_d$. The value of  \( r \) %used by the worker (at Step~\ref{step:cert_par_wor})
can be dynamically adjusted based on factors such as the \textit{level} of the node in the \bnb tree, such 
that at the upper levels  %(early stages)
of the tree (closer to the root), more tasks (regions) are distributed to other workers, while at lower levels (deeper in the tree), % (later stages), 
fewer regions are distributed.  %For instance, a decreasing function of the form $r(l) = 0.8 / l$ could be used to . At earlier stages, a worker distributes more tasks, while at later stages, fewer tasks are assigned to others.  
The \textit{for} loop at Step~\ref{step:cert_par_for1} distributes the fraction $r$ of these regions to available workers in $\mathcal{W}$, while the remaining regions are retained for evaluation by the master.  

Each worker then executes the \textsc{B\&BCertDyn} function on its assigned tuple $\texttt{reg}^k$, recursively invoking Algorithm~\ref{alg:cert_par} until termination. 
Once a worker completes its tasks, it returns the results to the master, which aggregates them into the final partition $\mathcal{F}_d$ at Step~\ref{step:alg_par_pushF}. The algorithm terminates when all workers have completed their tasks and no unprocessed regions remain in $\mathcal{S}_d$.

Algorithm~\ref{alg:cert_par} follows a \textit{multiple-master-worker} paradigm \cite{ralphs2016parallel}, where each worker can dynamically assume the role of a master, distributing tasks to idling workers (see Fig.~\ref{fig:topology_par}).  By dynamically distributing tasks and concurrently processing regions, Algorithm~\ref{alg:cert_par} improves scalability %, minimizes idle time, 
and is well-suited for large-scale certification problems.  

\begin{figure}[htbp]
%\hspace{2cm} %\hfill % Push to center *for Thesis
%%%%%%%%%%%%%%%%%%%%%%%%%%%%
\begin{tikzpicture}[
    scale=0.8,
    master/.style={rectangle, draw, rounded corners, minimum width=1.2cm, minimum height=0.9cm, fill=purple!20, text centered},
    worker/.style={rectangle, draw, rounded corners, minimum width=1.2cm, minimum height=0.9cm, fill=cyan!20, text centered},
    subworker/.style={rectangle, draw, rounded corners, minimum width=1.2cm, minimum height=.9cm, fill=green!20, text centered},
    level distance=1.5cm,
    sibling distance=3.5cm,
    every node/.style={scale=0.8},
    edge from parent/.style={draw, -latex},
    level 2/.style={sibling distance=2.5cm}
]

% Root node (Master^0)
\node[master] (Master0) {Master$^0$}
    child {node[worker] (Worker1) {Worker$^1$ \\ (Master$^1$)}
        child {node[subworker] (Worker11) {Worker$^{1,1}$}}
        child {node[subworker] (Worker12) {Worker$^{1,2}$}}
    }
    child {node[worker] (Worker2) {Worker$^2$ \\ (Master$^2$)}
        child {node[subworker] (Worker21) {Worker$^{2,1}$}}
    }
    child {node[worker] (WorkerP) {Worker$^P$}};

% Arrows between nodes
\draw[thick, <-] (Master0) -- (Worker1);
\draw[thick, <-] (Master0) -- (Worker2);
\draw[thick, <-] (Master0) -- (WorkerP);

\draw[thick, <-] (Worker1) -- (Worker11);
\draw[thick, <-] (Worker1) -- (Worker12);

\draw[thick, <-] (Worker2) -- (Worker21);

\end{tikzpicture}
\caption{Topology of the proposed parallel Algorithm~\ref{alg:cert_par}.} % and~\ref{alg:cert_par_mod}.}
\label{fig:topology_par}
\end{figure}
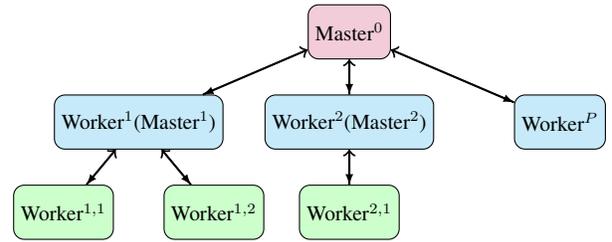

\begin{remark}
At Step~\ref{step:cert_par_wor} of Algorithm~\ref{alg:cert_par}, if the distribution ratio $r$ (input argument) for the worker is set to $0$, then \textsc{B\&BCertDyn} executed by the worker reduces to \textsc{B\&BCert}, as no further task distribution occurs. Consequently, each worker independently executes Algorithm~\ref{alg:cert_ser} on its assigned region. Upon completing its task and certifying the entire region, the worker returns the results directly to the master.  
%In this case, Algorithm~\ref{alg:cert_par} follows a master-worker paradigm, where a single master certifies the root node before assigning tasks to workers. 
This special case closely resembles Algorithm~\ref{alg:cert_par_dom}, with the key difference that the parameter set is partitioned only after certifying the root node. This ensures that the final partition coincides with the one generated by the serial algorithm, avoiding artificial subdivisions of the parameter set.
\end{remark}
\begin{comment}
Depending on the value of the distribution ratio \( r \), Algorithm~\ref{alg:cert_par} covers the following special cases:
\begin{itemize}
    \item If \( r = 0 \) at Step~\ref{step:alg_par_dist}, no task distribution occurs, and the algorithm reduces to the serial \textsc{B\&BCert} framework in Algorithm~\ref{alg:cert_ser}.
    \item If \( r = 0 \) at Step~\ref{step:cert_par_wor} (i.e., the input argument for workers), each worker executes Algorithm~\ref{alg:cert_ser} independently on its assigned subregion. Upon completing its task and certifying the entire subregion, the worker returns the results directly to the master. In this case, Algorithm~\ref{alg:cert_par} follows a master-worker paradigm~\cite{ralphs2016parallel}, where a single master certifies the root node before distributing tasks to workers. This special case closely resembles Algorithm~\ref{alg:cert_par_dom}, with the key difference that the parameter set is partitioned only after certifying the root node. This ensures that the final partition coincides with the one generated by the serial algorithm, avoiding artificial subdivisions of the parameter set.
\end{itemize}
\end{comment}

%%%%%%%%%%%%%%%%%%%%%%%%%%%%%%%%%%%%%%%%%%%
%% ======================================%%
%%%%%%%%%%%%% PROPERTIES %%%%%%%%%%%%
\subsection{Properties of the parallel algorithms} \label{sec:prop_par}
%%%%%%%%%%%%%%%%%%%%%%%%%%%%%%%%%%%%%%%%%%%%%%%%%%
This section investigates the properties of the proposed parallel algorithms. 
These algorithms spatially parallelize the serial certification framework across the parameter set, while the exploration of nodes in the \bnb search tree remains sequential. The following theorems formally establish that the sequences of \bnb nodes explored in Algorithms~\ref{alg:cert_par_dom} and~\ref{alg:cert_par} are equivalent to those explored by Algorithm~\ref{alg:cert_ser} (the serial counterpart) for all parameters of interest.
%%%%%%%%%%%%%%%%%%%%%%%%%%%%%%%%%%%%%%%%%%%%%%%%%%%%%%%%%%%%%%%%%%%%%%%%%%%%%%%%%%%%%%
\begin{theorem} \label{thr:seq_par_dom} 
%Assume that Assumption~\ref{ass:assump_par} holds. 
Let $\mathbb{B}_s(\theta)$ denote the sequence of nodes explored by Algorithm~\ref{alg:cert_par_dom} (\textsc{B\&BCertStat})  applied to the problem~\eqref{eq:mpMILP_par}/\eqref{eq:mpMIQP_par} for a fixed $\theta \in \Theta_0$ in a terminated region in $\mathcal{F}_s$. Moreover, let $\mathbb{B}(\theta)$ represent the sequence of nodes explored by Algorithm~\ref{alg:cert_ser} (\textsc{B\&BCert}) applied to the same problem for $\theta$ in a terminated region in $\mathcal{F}$. Then,  $\mathbb{B}_s(\theta) = \mathbb{B}(\theta)$, $\forall \theta \in \Theta_0$.
\end{theorem}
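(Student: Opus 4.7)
The plan is to reduce both sides of the identity $\mathbb{B}_s(\theta)=\mathbb{B}(\theta)$ to a common object, namely the node sequence produced by the online (serial) B\&B solver when run at the fixed parameter $\theta$, which I will denote $\mathbb{B}_{\text{on}}(\theta)$. First I would invoke the correctness of the serial certification framework from \cite{shoja2025unifying}: by Assumption~\ref{ass:assump_par}, \textsc{solveCert} reproduces, for every $\theta\in\Theta$, exactly the solver state sequence of the online LP/QP solver, and \textsc{cutCert} evaluates the three pruning conditions of Section~\ref{sec:probform_par} pointwise in $\theta$ using the PWA/PWQ lower bound and the stored upper bound. Consequently, for any initial tuple of the form $(\Theta,\{(\emptyset,\emptyset)\},0,\infty)$ with $\theta\in\Theta\subseteq\Theta_0$, the terminated region in the output of \textsc{B\&BCert} that contains $\theta$ records exactly the node sequence $\mathbb{B}_{\text{on}}(\theta)$. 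Applied with $\Theta=\Theta_0$, this yields $\mathbb{B}(\theta)=\mathbb{B}_{\text{on}}(\theta)$.

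Next I would treat the static parallel case. Since $\{\Theta^k\}_{k=1}^{n_p}$ covers $\Theta_0$ (Step~\ref{step:alg_dom_split}), there exists $k^\star\in\mathbb{N}_{1:n_p}$ with $\theta\in\Theta^{k^\star}$. The worker assigned to $\Theta^{k^\star}$ invokes \textsc{B\&BCert} on the tuple $(\Theta^{k^\star},\{(\emptyset,\emptyset)\},0,\infty)$; in particular it starts from the root node $\eta_0$ with the same initial upper bound $\bar J^0=\infty$ and accumulated complexity $\kappa_{\text{tot}}^0=0$ as the serial algorithm. By the parameter-pointwise argument above, the terminated region in $\mathcal{F}^{k^\star}$ (and therefore in $\mathcal{F}_s$ after the aggregation at Step~\ref{step:alg_dom_pushF}) that contains $\theta$ stores the node sequence $\mathbb{B}_{\text{on}}(\theta)$, so $\mathbb{B}_s(\theta)=\mathbb{B}_{\text{on}}(\theta)$. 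Combining the two equalities gives $\mathbb{B}_s(\theta)=\mathbb{B}(\theta)$ for every $\theta\in\Theta_0$.

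The core lemma that needs to be stated cleanly is the parameter-pointwise invariance of \textsc{B\&BCert} under shrinking of its input region: if $\theta\in\Theta'\subseteq\Theta$ and both algorithms are started with identical pending-node list, accumulated complexity, and upper bound, then the node sequence certified at $\theta$ is identical. A short induction on the depth of the generated \bnb subtree suffices: the inductive step uses (i) the fact that \textsc{solveCert}$(\eta,\Theta')$ and \textsc{solveCert}$(\eta,\Theta)$ produce the same local lower bound $\barbelow{J}^j(\theta)$ and the same iteration count $\kappa^j$ at $\theta$, by Assumption~\ref{ass:assump_par}, and (ii) the fact that \textsc{cutCert} evaluates $\barbelow{J}^j(\theta)\geq\bar J(\theta)$ and integer feasibility of $\barbelow{x}(\theta)$ by pointwise tests, so the branch/prune decision at $\theta$ is identical in both runs.

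The main obstacle I anticipate is the boundary issue: a given $\theta$ may lie in several regions $\Theta^k$ (overlapping on lower-dimensional facets) and, during the recursion, in several subregions $\Theta^j$ produced by \textsc{solveCert}. The statement of the theorem only requires equality \emph{within some} terminated region of $\mathcal{F}$ (respectively $\mathcal{F}_s$), so it is enough to pick one such region for each algorithm and to argue that the node sequence stored there is well-defined and agrees with $\mathbb{B}_{\text{on}}(\theta)$; a tie-breaking convention (e.g., selecting the region with lexicographically smallest index) inherited from the underlying serial framework in \cite{shoja2025unifying} makes this precise and completes the proof.
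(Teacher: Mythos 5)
Your proposal is correct, but it takes a recognizably different route from the paper. The paper's proof never mentions the online solver: it compares Algorithm~2 and Algorithm~1 directly, observing that both start from the tuple $(\Theta_0,\{(\emptyset,\emptyset)\},0,\infty)$, that each artificial region $\Theta^k$ inherits the same initial node list and upper bound, and that for any fixed $\theta$ the operations performed on the region of Algorithm~1 containing $\theta$ are replicated on $\Theta^k\cap\Theta^j$ in Algorithm~2, so the calls to \textsc{solveCert} and \textsc{cutCert} — and hence the node sequence — coincide pointwise. You instead reduce both $\mathbb{B}_s(\theta)$ and $\mathbb{B}(\theta)$ to the online sequence $\mathbb{B}_{\text{on}}(\theta)$ by invoking the serial--online equivalence of \cite{shoja2025unifying} on each initial tuple $(\Theta,\{(\emptyset,\emptyset)\},0,\infty)$ with $\theta\in\Theta$. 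This buys modularity (you reuse the prior theorem rather than re-deriving pointwise equivalence of the two certification runs) and it effectively establishes Corollary~1 of the paper in the same stroke; the paper instead keeps that comparison separate and derives Corollary~1 from Theorem~1 afterwards. Your third paragraph — the restriction-invariance lemma proved by induction on the depth of the generated subtree — is in substance the same argument the paper states informally, and you make two points the paper leaves implicit: that the prior equivalence theorem must be applicable to an arbitrary polyhedral subregion $\Theta^{k^\star}\subseteq\Theta_0$ (which holds because the restricted problem is again an mp-MILP/mp-MIQP of the same form), and that a $\theta$ lying on a facet shared by several regions needs a tie-breaking convention for the statement to be well-posed. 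One small imprecision: Assumption~1 as stated in this paper only guarantees coincidence of the complexity measure and the PWA/PWQ structure of $\barbelow{J}$, not reproduction of the full LP/QP solver state sequence; the stronger property you cite comes from the assumption in \cite{shoja2025unifying}, and you should attribute it there rather than to Assumption~\ref{ass:assump_par}.
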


\begin{proof}
Both algorithms start with the same initial tuple $\texttt{reg}^0 = (\Theta^0, \mathcal{T}^0, \kappa_{\text{tot}}^0, \bar{J}^0) = (\Theta_0, \{(\emptyset, \emptyset)\}, 0, \infty)$, ensuring identical starting conditions. In Algorithm~\ref{alg:cert_par_dom}, $\Theta_0$ is partitioned into artificial regions $\{\Theta^k\}_{k=1}^{n_p}$, and each such region retains the same initial $\mathcal{T}^0$, $\kappa_{\text{tot}}^0$, and $\bar{J}^0$. By construction, $\cup \Theta^k = \Theta_0$, ensuring that for any $\theta \in \Theta_0$, there is a unique artificial region $\Theta^k$ that contains $\theta$. 

Now, consider an artificial region $\Theta^k$ and a fixed $\theta \in \Theta^k$. Algorithm~\ref{alg:cert_par_dom} processes $\theta$ by applying \textsc{B\&BCert} to $\Theta^k$, while in Algorithm~\ref{alg:cert_ser}, $\theta$ is processed by applying the exact same steps to the entire parameter set $\Theta_0 \ni \theta$. %Since the certification process is performed in spatially independent regions in both algorithms and given that the initial node list and upper bound are identical, the node exploration process — including calls to \textsc{solveCert} and \textsc{cutCert} — is executed identically for \( \theta \) in both algorithms. %
In both algorithms, the process is performed in spatially independent regions. In particular, if $ \theta \in \Theta^j$ in Algorithm~\ref{alg:cert_ser}, then the same operation is performed in Algorithm~\ref{alg:cert_par_dom} for $\theta \in \Theta^k \cap \Theta^j$. Given that the initial node list and upper bound are the same in both algorithms, the node exploration process, including calls to \textsc{solveCert} and \textsc{cutCert}, is performed identically for $\theta$ in both cases.   
Consequently, the node exploration sequences remain identical in both algorithms for $\theta$,   
i.e.,  $\mathbb{B}_s(\theta) = \mathbb{B}(\theta)$. 
Since $\theta$ and $k$ are arbitrarily, this holds $\forall \theta \in \Theta_k$, and all $k$, completing the proof.  
\end{proof}

Therefore, although Algorithm~\ref{alg:cert_par_dom} introduces an artificial partitioning of $\Theta_0$, potentially resulting in a finer partitioning in $\mathcal{F}_s$ compared to $\mathcal{F}$, Theorem~\ref{thr:seq_par_dom} ensures that this does not affect the sequence of nodes explored for any $\theta \in \Theta_0$. In other words, the correctness of the node exploration process and its equivalence to the serial algorithm are preserved.

\begin{theorem}	 \label{thr:seq_par}  
Let $\mathbb{B}^i_d(\theta)$ denote the sequence of nodes explored by Algorithm~\ref{alg:cert_par} (\textsc{B\&BCertDyn})  applied to the problem~\eqref{eq:mpMILP_par}/\eqref{eq:mpMIQP_par} for a terminated region $\Theta^i$ 
in $ \mathcal{F}_d $. %  =  \{(\Theta^i, \emptyset, \kappa^I, \bar{J}^i)\}_{i} %= \{\texttt{reg}^i\}_{i} 
Moreover, let $\mathbb{B}^i(\theta)$ represent the sequence of nodes explored by  Algorithm~\ref{alg:cert_ser} (\textsc{B\&BCert}) applied to the same problem for the same region 
 $\Theta^i$  
in $ \mathcal{F}$. % = \{\texttt{reg}^i\}_{i}  
Then, $\mathbb{B}_d^i(\theta) = \mathbb{B}^i(\theta)$, $\forall \theta \in \Theta^i$, and all $i$. 
\end{theorem}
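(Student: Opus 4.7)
The plan is to mirror the proof of Theorem~\ref{thr:seq_par_dom} while accounting for the recursive task distribution of Algorithm~\ref{alg:cert_par}. Both Algorithm~\ref{alg:cert_par} and Algorithm~\ref{alg:cert_ser} begin with the same initial tuple $\texttt{reg}^0 = (\Theta_0, \{(\emptyset,\emptyset)\}, 0, \infty)$, and the body of the main while-loop---popping a tuple, selecting a node $\eta$, calling \textsc{solveCert}, constructing the sub-tuples $\texttt{reg}^j$, and applying \textsc{cutCert}---is structurally identical across the two algorithms. The only operational difference is that Algorithm~\ref{alg:cert_par} hands off a fraction $r$ of the newly created tuples in $\mathcal{S}_d$ to workers, each of which recursively invokes \textsc{B\&BCertDyn} on its assigned tuple.

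The central step is to establish that each tuple in $\mathcal{S}_d$ is self-contained: its processing depends only on its own state (the parameter region $\Theta$, the pending node list $\mathcal{T}$, the accumulated complexity $\kappa_{\text{tot}}$, and the upper bound $\bar{J}$), with no shared mutable state across tuples. Because \textsc{solveCert} partitions $\Theta$ into spatially disjoint subregions and each sub-tuple inherits its own local copy of $\mathcal{T}$ and $\bar{J}$, the deterministic sequence of \textsc{solveCert} and \textsc{cutCert} calls applied to a tuple $\texttt{reg}^k$ is the same whether $\texttt{reg}^k$ is processed by the master or by a worker. This is exactly the property that Theorem~\ref{thr:seq_par_dom} also exploits, only now invoked at the granularity of individual pending tuples rather than only at the top-level artificial partition.

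I would then formalize the argument by induction on the recursion depth of \textsc{B\&BCertDyn}. At depth zero, all processing occurs at the master and the equivalence to Algorithm~\ref{alg:cert_ser} follows essentially verbatim from the proof of Theorem~\ref{thr:seq_par_dom}. For the inductive step, a worker invoked on some $\texttt{reg}^k$ starts from a state identical to what Algorithm~\ref{alg:cert_ser} would see if it popped the same $\texttt{reg}^k$ from its stack; by the induction hypothesis, the worker's subsequent node exploration for any $\theta \in \Theta^k$ coincides with that of the serial algorithm restricted to $\texttt{reg}^k$. Concatenating the node sequences along the chain of master--worker hand-offs leading to a terminated region $\Theta^i \in \mathcal{F}_d$ then yields exactly $\mathbb{B}^i(\theta)$ for every $\theta \in \Theta^i$. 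The main obstacle is making the tuple-independence argument airtight---in particular, verifying via the data flow of \textsc{cutCert} (Algorithm~8 of~\cite{shoja2025unifying}) that any updated upper bound remains local to its region and is never overwritten by concurrent workers, and that the dynamic choice of distribution ratio $r$ does not alter which sub-tuples are eventually processed. Once this locality is established, the proof reduces to the same regional equivalence argument already used in Theorem~\ref{thr:seq_par_dom}.
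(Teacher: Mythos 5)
Your proposal is correct and follows essentially the same route as the paper's proof: both rest on the observation that the loop body of \textsc{B\&BCertDyn} (pop, \textsc{solveCert}, \textsc{cutCert}) is step-for-step identical to that of \textsc{B\&BCert}, and that a worker receiving a tuple $\texttt{reg}^k$ performs exactly the operations the serial algorithm would perform on that same tuple at a later iteration, with the recursive hand-offs handled by repeating the argument (which you phrase as an explicit induction on recursion depth where the paper argues it informally). Your added attention to the locality of $\bar{J}$ within each tuple makes explicit a point the paper's proof leaves implicit, but it does not change the structure of the argument.
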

 
\begin{proof} \label{thrm:seq_par_proof}
Consider Algorithms~\ref{alg:cert_ser} and~\ref{alg:cert_par}, both initialized with the same initial tuple $\texttt{reg}^0$. 
Now, consider an iteration of Algorithm~\ref{alg:cert_par} starting from Step~\ref{step:alg_par_pop}, where a tuple $\texttt{reg} = (\Theta, \mathcal{T}, \kappa_{\text{tot}}, \bar{J})$
is selected. By inspection, the sequence of operations at Steps~\ref{step:alg_par_T}--\ref{step:alg_par_cut} is identical to Steps~\ref{step:alg_ser_T}--\ref{step:alg_ser_cut} in Algorithm~\ref{alg:cert_ser} when processing $\texttt{reg}$. Algorithm~\ref{alg:cert_par} then follows one of the following cases based on the value of  $r$:  
\begin{itemize}[leftmargin=*]
\item $r = 0$:  
In this case, no subregions are distributed among workers, and  Algorithm~\ref{alg:cert_par} reduces to Algorithm~\ref{alg:cert_ser}.  
Thus, the sequence of nodes explored in $\Theta^i$ remains the same in both algorithms, i.e.,  $\mathbb{B}^i_d(\theta) = \mathbb{B}^i(\theta)$, $ \forall \theta \in \Theta^i$, $\forall i$. 

\item $r \neq 0$:  
In this case, Algorithm~\ref{alg:cert_par} dynamically distributes subregions among workers. Let  
$\texttt{reg}^k = (\Theta^k, \mathcal{T}^k, \kappa_{\text{tot}}^k, \bar{J}^k)$
be selected from $\mathcal{S}_d$ at Step~\ref{step:alg_par_dist} and assigned to a worker $w \in \mathcal{W}$. The worker then applies the same steps (from Step~\ref{step:alg_par_T} to Step~\ref{step:alg_par_cut}) to $\texttt{reg}^k$, just as Algorithm~\ref{alg:cert_ser} would process $\texttt{reg}^k$ at later iterations.  

If the worker $w$ further distributes subregions to other idling workers by recursively calling \textsc{B\&BCertDyn}, the same reasoning applies to each recursion. Each worker processes its assigned subregions independently while following the same node exploration strategy as the serial algorithm. Hence, the sequence of explored nodes remains consistent with the serial Algorithm~\ref{alg:cert_ser}. %further distributes subregions to other idling workers by recursively calling \textsc{B\&BCertDyn}, the same reasoning holds for each recursion. Hence, the sequence of explored nodes remains consistent, since all subregions are processed using the same procedure and applying the exact same steps. Thus, recursive distribution does not alter this consistency, as each worker processes its assigned subregions independently but follows the same node exploration strategy as the serial algorithm. 
Therefore, for any terminated  $\Theta^i$, Algorithm~\ref{alg:cert_par} explores the same sequence of nodes as Algorithm~\ref{alg:cert_ser}, ensuring  $\mathbb{B}^i_d(\theta) = \mathbb{B}^i(\theta)$, $ \forall \theta \in \Theta^i$, and all $i$.
\end{itemize}  
This completes the proof.  
\end{proof}

An important observation is that, while different parts of the parameter space are generally processed in a different order in the parallel algorithms compared to the serial one, the exploration (and hence certification) of the \bnb process for any fixed parameter \( \theta \in \Theta_0 \) remains unique and follows the same order as in the serial algorithm.  
As a key result, both proposed parallel algorithms yield the same node sequence exploration as the online \bnb algorithm, as summarized in the following corollary.
%. Consequently, these parallel algorithms provide complexity measures that coincide pointwise with the online \bnb method for any fixed $\theta \in \Theta_0$, as shown in Corollaries~\ref{corr:seq_par} and~\ref{corr:exact_cert_parr}.
%%%%%%%%%%%%%%%%%%%%%%%%%%%%%%%%%%%%%%%%%%%%%%%%%%%%%%%%%%%%%%%%%%%%%%%%%%%%%%%%%%%%%%%%%%%%%%%%%%%%%%%
\begin{corollary} \label{corr:seq_par} 
Let $\mathbb{B}^*(\theta)$ denote the sequence of nodes explored to solve the problem~\eqref{eq:mpMILP_par}/\eqref{eq:mpMIQP_par} for any fixed $\theta \in \Theta_0$ using the online \bnb algorithm (e.g., \cite[Algorithm~1]{shoja2025unifying}).  Moreover, let $\mathbb{B}_s(\theta)$ ($\mathbb{B}_d(\theta)$) represent the sequence of nodes explored by Algorithm~\ref{alg:cert_par_dom}  (Algorithm~\ref{alg:cert_par}) applied to the problem~\eqref{eq:mpMILP_par}/\eqref{eq:mpMIQP_par} for $\theta$ in a terminated region in $\mathcal{F}_s$  ($\mathcal{F}_d$).  % \in \Theta_0
%Let $\mathbb{B}_s(\theta)$ ($\mathbb{B}_d(\theta)$) denote the sequence of nodes explored by Algorithm~\ref{alg:cert_par_dom}  (Algorithm~\ref{alg:cert_par}) applied to the problem~\eqref{eq:mpMILP_par}/\eqref{eq:mpMIQP_par} for any fixed $\theta \in \Theta_0$ in a terminated region in $\mathcal{F}_s$  ($\mathcal{F}_d$).  Moreover, let $\mathbb{B}^*(\theta)$ represent the sequence of nodes explored to solve the corresponding MILP/MIQP problem for a fixed $\theta$ using the online \bnb algorithm (e.g., \cite[Algorithm~1]{shoja2025unifying}). 
Then,  
$\mathbb{B}^*(\theta) = \mathbb{B}_s(\theta)$ ($= \mathbb{B}_d(\theta)$),  $\forall \theta \in \Theta_0$. 
\end{corollary}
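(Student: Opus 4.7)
The plan is to reduce the corollary to a short chain of equalities: first establish that the serial algorithm \textsc{B\&BCert} reproduces the online \bnb node sequence for every $\theta \in \Theta_0$, and then transfer this equivalence to the parallel algorithms via Theorems~\ref{thr:seq_par_dom} and~\ref{thr:seq_par}.

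For the first step, I would argue that Algorithm~\ref{alg:cert_ser} is, by design, a parametric simulation of the online \bnb procedure. Both algorithms begin from the root node $\eta_0=(\emptyset,\emptyset)$ with upper bound $\bar{J}=\infty$. At each iteration, \textsc{B\&BCert} processes the unique region containing a fixed $\theta$, pops the same pending node $\eta$ that online \bnb would next explore (since the tree management rule depends on the tree structure, not on $\theta$), invokes \textsc{solveCert} to obtain the relaxation's value function and solution, and then evaluates the dominance, infeasibility, and integer-feasibility cut conditions via \textsc{cutCert}. By Assumption~\ref{ass:assump_par}, the value function, solution, and complexity measure returned by \textsc{solveCert} coincide at $\theta$ with those returned by the online subsolver, so each cut decision evaluated parametrically in \textsc{cutCert} agrees pointwise at $\theta$ with its online counterpart. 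A straightforward induction on the number of nodes processed then yields $\mathbb{B}(\theta)=\mathbb{B}^*(\theta)$ for all $\theta \in \Theta_0$.

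For the second step, I would simply invoke the two theorems already proved. Theorem~\ref{thr:seq_par_dom} gives $\mathbb{B}_s(\theta)=\mathbb{B}(\theta)$ for all $\theta \in \Theta_0$, which combined with the first step yields $\mathbb{B}_s(\theta)=\mathbb{B}^*(\theta)$. For \textsc{B\&BCertDyn}, note that $\{\Theta^i\}_i$ forms a cover of $\Theta_0$ in $\mathcal{F}_d$, so every $\theta \in \Theta_0$ lies in some terminated region $\Theta^i$; Theorem~\ref{thr:seq_par} then gives $\mathbb{B}_d^i(\theta)=\mathbb{B}^i(\theta)$, and combining again with the first step produces $\mathbb{B}_d(\theta)=\mathbb{B}^*(\theta)$.

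The main obstacle is that the equivalence between \textsc{B\&BCert} and the online \bnb algorithm is implicit in the design of~\cite{shoja2025unifying} but is not stated as a formal theorem in this excerpt, so some care is needed to justify it rigorously. Concretely, one must check that (i) the node selection and branching variable selection rules used by \textsc{B\&BCert} are identical to those of the online solver (they depend only on the tree state), (ii) Assumption~\ref{ass:assump_par} indeed guarantees pointwise agreement of $\barbelow{J}^j(\theta)$ and $\kappa^j(\theta)$ with the online subsolver, and (iii) the parametric cut tests in \textsc{cutCert} evaluated at $\theta$ reproduce the scalar cut tests used online. Once these points are verified, the remainder of the proof is essentially a concatenation of previously established equalities.
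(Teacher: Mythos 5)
Your proposal is correct and follows essentially the same route as the paper: the paper's proof is precisely the two-link chain you describe, citing Theorem~1 of~\cite{shoja2025unifying} for the pointwise equivalence between the serial \textsc{B\&BCert} and the online \bnb algorithm, and then combining it with Theorem~\ref{thr:seq_par_dom} (resp.\ Theorem~\ref{thr:seq_par}) for the parallel-to-serial step. The only difference is that you reconstruct a proof sketch of the serial-versus-online equivalence from Assumption~\ref{ass:assump_par}, whereas the paper simply invokes it as an already-established external result.
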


\begin{proof}
The proof directly follows from Theorem~1 in~\cite{shoja2025unifying}, which establishes the pointwise equivalence of the explored nodes in the serial \textsc{B\&BCert}  and the online \bnb algorithms, combined with  Theorem~\ref{thr:seq_par_dom}  (Theorem~\ref{thr:seq_par}) in this paper. 
\end{proof}
%%%%%%%%%%%%%%%%%%%%%%%%%%%%%%%%%%%%%%%%%%%%%%%%

As a consequence of Corollary~\ref{corr:seq_par}, the parallel algorithms provide complexity measures that coincide pointwise with the online \bnb method for any fixed \( \theta \in \Theta_0 \), as summarized in the following corollary.
%as shown in Corollaries~\ref{corr:seq_par} and~\ref{corr:exact_cert_parr}.

\begin{corollary} \label{corr:exact_cert_parr} 
Let $\kappa^*_{tot}(\theta)$ denote the accumulated complexity measure returned by the online \bnb algorithm (e.g., \cite[Algorithm~1]{shoja2025unifying}) to solve the problem~\eqref{eq:mpMILP_par}/\eqref{eq:mpMIQP_par} for any fixed $\theta \in \Theta_0$. The accumulated complexity measure $\kappa_{tot}(\theta): \Theta_0 \rightarrow \mathbb{N}_0$ returned by Algorithm~\ref{alg:cert_par_dom} (Algorithm~\ref{alg:cert_par}) applied to the problem~\eqref{eq:mpMILP_par}/\eqref{eq:mpMIQP_par} in a terminated region $\Theta^i$ in $\mathcal{F}_s$ ($\mathcal{F}_d$) 
%Moreover,   Then (i) 
satisfies: (i) $\kappa_{tot} = \kappa^*_{tot}$, $\forall \theta\in \Theta^i$, and all $i$, and (ii) $\kappa_{tot}(\theta)$ is PWC. 
\end{corollary}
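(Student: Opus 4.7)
The plan is to derive the corollary as a direct consequence of Corollary~\ref{corr:seq_par} together with Assumption~\ref{ass:assump_par}, handling parts (i) and (ii) separately. The underlying idea is that the total complexity measure is determined entirely by the sequence of explored nodes and the per-node complexities returned by \textsc{solveCert}, both of which are already pinned down by the previous results.

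For part (i), I would fix an arbitrary $\theta \in \Theta^i$ for some terminated region $\Theta^i \in \mathcal{F}_s$ (respectively $\mathcal{F}_d$). By Corollary~\ref{corr:seq_par}, the sequence of nodes explored by Algorithm~\ref{alg:cert_par_dom} (respectively Algorithm~\ref{alg:cert_par}) at $\theta$ coincides with $\mathbb{B}^*(\theta)$, the sequence explored by the online \bnb algorithm. At each node in this sequence, the parallel algorithm updates its accumulated complexity by $\kappa^j$ returned by \textsc{solveCert} at Step~\ref{step:alg_ser_cert} (inherited verbatim inside \textsc{B\&BCert} called from Algorithm~\ref{alg:cert_par_dom}, and from the analogous step in Algorithm~\ref{alg:cert_par}). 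By Assumption~\ref{ass:assump_par}, $\kappa^j(\theta)$ coincides pointwise with the complexity measure that the online subsolver incurs when solving the corresponding LP/QP relaxation at $\theta$. Summing over the (identical) sequences therefore yields $\kappa_{tot}(\theta) = \kappa^*_{tot}(\theta)$. Since $\theta$ and $i$ were arbitrary, (i) follows.

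For part (ii), I would argue that $\kappa_{tot}$ is constant on each $\Theta^i$. By Theorems~\ref{thr:seq_par_dom} and~\ref{thr:seq_par}, the node sequence explored within a single terminated region is the same for every $\theta$ in that region. Moreover, by Assumption~\ref{ass:assump_par}, \textsc{solveCert} partitions its input set into subsets on which $\kappa^j$ is piecewise constant, so each call contributes a constant value along any branch of the recursion that remains inside one such subset; since each $\Theta^i$ is, by construction, obtained as an intersection of such cells accumulated along the explored branch, each $\kappa^j$ added into $\kappa_{tot}$ is constant on $\Theta^i$. Hence $\kappa_{tot}$ takes a single integer value on each $\Theta^i$, and since $\{\Theta^i\}_i$ is a finite partition of $\Theta_0$, the function $\kappa_{tot} : \Theta_0 \to \mathbb{N}_0$ is PWC.

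The only subtle step I anticipate is justifying that each increment $\kappa^j$ added to $\kappa_{tot}$ inside the recursion is indeed constant over the whole terminated region $\Theta^i$, rather than merely over the subregion produced by that particular \textsc{solveCert} call. This is handled by observing that $\Theta^i \subseteq \Theta^j$ for the subregion chosen at every recursive step leading to $\Theta^i$, so PWC-ness of each $\kappa^j$ on $\Theta^j$ immediately implies constancy on $\Theta^i$. Everything else is a direct invocation of the previously proved statements.
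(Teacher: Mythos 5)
Your proposal is correct and follows essentially the same route as the paper: part (i) is obtained by combining Corollary~\ref{corr:seq_par} (identical node sequences) with Assumption~\ref{ass:assump_par} (pointwise-correct per-relaxation complexity from \textsc{solveCert}), and part (ii) from constancy of the accumulated measure on each terminated region. Your justification of (ii) via $\Theta^i \subseteq \Theta^j$ and PWC-ness of each increment is a slightly more explicit version of the paper's one-line argument, but it is the same idea.
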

\begin{proof} \label{proof_th:exact_cert_parr}
From Assumption~\ref{ass:assump_par}, \textsc{solveCert} correctly certifies the relaxations. Moreover, from Corollary~\ref{corr:seq_par}, Algorithm~\ref{alg:cert_par_dom} (Algorithm~\ref{alg:cert_par})  explores the same sequence of relaxations for a fixed $\theta \in \Theta_0$ as the online \bnb algorithm. As a result, the accumulated complexity measure $\kappa^i_{\text{tot}}$ returned by Algorithm~\ref{alg:cert_par_dom} (Algorithm~\ref{alg:cert_par})  for  $\Theta^i \ni \theta$ is identical to the accumulated complexity measure $\kappa^*_{\text{tot}}$ returned by the online \bnb algorithm for $\theta$. Since $\theta$ and $i$ are arbitrarily, this holds for all $\theta$ and all $i$, completing the proof of (i). The proof of (ii) follows from (i), as the accumulated complexity measure remains constant in any terminated region in $\mathcal{F}_s$ ($\mathcal{F}_d$). 
%The proof of (i) directly follows from Corollary~\ref{corr:seq_par} Theorem~2 in~\cite{shoja2025unifying}, which establishes the pointwise equivalence of the complexity measure in the serial \textsc{B\&BCert} and the online \bnb algorithms, combined with Theorem~\ref{thr:seq_par_dom}  (Theorem~\ref{thr:seq_par}) in this paper. 
%Moreover, for MILPs where the parameter set is polyhedral and all partitions remain polyhedral, \( \kappa_{tot}(\theta) \) is polyhedral PWC. %The proof directly follows from (i) Corollary~2 and (ii) Theorem~2 in~\cite{shoja2025unifying}, which establishes (i) the PWC property of the pointwise equivalence of the complexity measure in the serial \textsc{B\&BCert} and the online \bnb algorithms, combined with Theorem~\ref{thr:seq_par_dom}  (Theorem~\ref{thr:seq_par}) in this paper. Moreover, for MILPs where the parameter set is polyhedral and all partitions remain polyhedral, $\kappa_{tot}(\theta)$ is polyhedral PWC.
\end{proof}	
%%%%%%%%%%%%%%%%%%%%%%%%%%%%%%%%%%%%%%%%%%%%%%%%%
%% ================================== %%
%%%%%%  REDUCING PEAK PERFOTRMANCE %%%%%
%%%%%%%%%%%%%%%%%%%%%%%%%%%%%%%%
\section{Reducing peak resource consumption}  Algorithm~\ref{alg:cert_ser}, and consequently the parallel Algorithms~\ref{alg:cert_par_dom} and~\ref{alg:cert_par}, can exhibit high memory demands, particularly as the problem dimension increases. A major contributor to this memory intensity is the storage requirement for the number of tuples maintained in the list $\mathcal{S}$ during execution.  In this section, we propose two novel modifications to Algorithm~\ref{alg:cert_ser} to keep the number of stored tuples in \( \mathcal{S} \) relatively low throughout the algorithm's execution, thereby reducing peak memory consumption. These refinements are then incorporated into the proposed parallel algorithms.

To implement these improvements, we introduce a variable called $\textit{state}$ ($\in \{ \texttt{none}, \texttt{"Fin"}, \texttt{"UnFin"}, \texttt{"CC"} \}$) for each region, indicating its current state. The meaning of these states is clarified in the following subsections.

\subsection{Early termination of relaxation's certification} 
\label{subsec:early_term_par}

%Consider the serial Algorithm~\ref{alg:cert_ser}.
When evaluating a node within a region in Algorithm~\ref{alg:cert_ser}, the (mp-LP/mp-QP) relaxation %—either an mp-LP~\eqref{eq:mpLP_par} or an mp-QP— 
is certified using \textsc{solveCert} within $\Theta$ at Step~\ref{step:alg_ser_cert}. This process can be computationally intensive, particularly at the root node, where the entire parameter set is considered. Additionally, executing the \textsc{solveCert} procedure to completion often generates a large number of regions, which are stored in $\mathcal{S}$, leading to excessive resource consumption.

To address these challenges, we propose temporarily pausing the \textsc{solveCert} procedure after generating and processing a predetermined number of regions, denoted by \( N^{\text{max}} \).   This allows certification to be paused if it becomes resource-intensive and resumed later. As a result, the number of regions returned by \textsc{solveCert} does not exceed \( N^{\text{max}} \). To implement this approach, each region generated by the \textsc{solveCert} procedure is assigned a \textit{state} as follows:
\begin{itemize}
    \item \texttt{"Fin"} (finished): if the mp-LP/mp-QP relaxation in this region has been fully certified. That is, if it has been solved to optimality or determined to be infeasible or unbounded.
    \item \texttt{"UnFin"} (unfinished): if the \textsc{solveCert} procedure was paused before certification was completed, i.e., before reaching a conclusive result.
\end{itemize}

Algorithm~\ref{alg:cert_ser} is then modified to evaluate cut conditions exclusively in regions with the \texttt{"Fin"} state, i.e., where the relaxation has been completely certified, while unfinished regions are stored in $\mathcal{S}$ for certification at a later stage. For each unfinished region, the information obtained before pausing is preserved, allowing the \textsc{solveCert} procedure to resume efficiently from the paused state rather than restarting from scratch. This explains the input argument "state" passed to \textsc{solveCert} at Step~\ref{step:alg_par_mod_cert}. This warm-starting capability helps mitigate the computational burden associated with certifying complex subproblems.

\subsection{Delayed cut-condition evaluation} 
\label{subsec:eval_cut_ind_par}

In Algorithm~\ref{alg:cert_ser}, parameter-dependent cut conditions are evaluated iteratively over a loop for all resulting (finished) regions at Steps~\ref{step:alg_ser_for}--\ref{step:alg_ser_cut}. This evaluation may further partition each $\Theta^j$ (e.g., when evaluating dominance cuts or selecting branching indices~\cite{shoja2025unifying}), which increases the number of tuples stored in $\mathcal{S}$ and, consequently, the memory usage.

To reduce memory consumption, Algorithm~\ref{alg:cert_ser} can be modified to delay cut-condition evaluations for all newly generated regions. Specifically, instead of immediately evaluating cut conditions at Step~\ref{step:alg_ser_cut} for the node corresponding to finished regions, these regions are stored in $\mathcal{S}$ with their  \textit{state} set to \texttt{"CC"} (cut conditions pending). When a region with the \texttt{"CC"} state is later popped from $\mathcal{S}$, cut conditions are evaluated for its corresponding node, and its state is updated to \texttt{"Fin"} (finished). %For all other states, the usual steps in Algorithm~\ref{alg:cert_ser} are applied to the region.
This modification reduces memory usage by deferring region partitioning originally from cut conditions, ensuring that $\mathcal{S}$ contains fewer tuples at any given time.

%%%%%%%%%%%%%%%%%%%%%%%%%%%%%%%%%%%%%%%%%%%%%%%%%%%%%%%%%%%%%%%%%%%
\subsection{Modified parallel certification algorithm} \label{subsec:mod_alg_par}

To minimize peak memory usage, we integrate the techniques outlined in Sections~\ref{subsec:early_term_par} and~\ref{subsec:eval_cut_ind_par} into the proposed parallel algorithms. In this revised approach, each tuple in $\mathcal{S}$ is expanded to include the lower bound $\barbelow{J}(\theta)$, the region's \textit{state}, and the current node $\eta^c$, which are used during cut-condition evaluation. Consequently, each tuple is now defined as $ \texttt{reg} = (\Theta, \mathcal{T}, \kappa_{\text{tot}}, \bar{J}, \barbelow{J}, \text{state}, \eta^c)$. At the start of the algorithm, the initial tuple is set to $\texttt{reg}^0 = (\Theta_0, \{(\emptyset, \emptyset)\}, 0, \infty, \infty, \texttt{none}, \texttt{none})$. 

The updated version of  Algorithm~\ref{alg:cert_par}, incorporating both modifications, is presented in Algorithm~\ref{alg:cert_par_mod}. In this algorithm, the modifications from Section~\ref{subsec:early_term_par} are implemented at Steps~\ref{step:alg_par_mod_cert} and Steps~\ref{step:alg_par_mod_st1}--\ref{step:alg_par_mod_st2}, while the modifications described in Section~\ref{subsec:eval_cut_ind_par} are implemented at Steps~\ref{step:alg_par_mod_st_cc}--\ref{step:alg_par_mod_cut}.

\begin{algorithm}[htbp]
\caption{\textsc{B\&BCertDynMod}: Modified dynamic parallel domain-decomposition complexity certification algorithm}
\label{alg:cert_par_mod}
\begin{algorithmic}[1]
\Require \Longunderstack[l]{$ \texttt{reg}^0=(\Theta^0, \mathcal{T}^0, \kappa_{{tot}}^0, \bar{J}^0, \barbelow{J}^0, \text{state}^0, \eta^0)$, ratio $r$} %Initial region, \\ distribution
\vspace{.03cm}
\Ensure Final partition $\mathcal{F}_d$ %_m = \{(\Theta^i, \kappa_{\text{tot}}^i)\}_{i}$
\vspace{.03cm}  
\State {$\mathcal{F}_d \leftarrow \emptyset$}
\State {Push $ \texttt{reg}^0$ to $\mathcal{S}_d$}
\State Initialize worker pool $\mathcal{W}$
\While {$\mathcal{S}_d \neq \emptyset$} % \textbf{or} there is an active $w \in \mathcal{W}$} %\textbf{or} workers in $\mathcal{W}$ are not idle}
%\If {$\mathcal{S}_d \neq \emptyset$}
\State Pop $\texttt{reg} = (\Theta, \mathcal{T}, \kappa_{\text{tot}}, \bar{J}, \barbelow{J}, \text{state}, \eta^c)$ from $\mathcal{S}_d$
\label{step:alg_par_mod_pop}
\If {$\text{state} = \texttt{"CC"}$} \label{step:alg_par_mod_st_cc}
\State $\mathcal{S}_d \leftarrow \textsc{cutCert}$$(\texttt{reg}, \barbelow{J}, \eta^c, \mathcal{S}_d)$ 
\label{step:alg_par_mod_cut} 
\ElsIf {$\text{state} = \texttt{"Fin"}$ \textbf{and} $\mathcal{T} = \emptyset$}
\State Add $\texttt{reg}$ %$(\Theta, \kappa_{\text{tot}})$ 
to $\mathcal{F}_d$
\label{step:alg_par_mod_pushF}
\Else
\State Pop new node $\eta$ from $\mathcal{T}$ 
\State \Longunderstack[l]{$\{(\Theta^j,\barbelow{J}^j,\kappa^j, \text{state}^j)\}_{j=1}^N\leftarrow\textsc{solveCert}(\eta,\Theta,\text{state},$ \\$N^{\text{max}})$}\label{step:alg_par_mod_cert}
\For {$j \in \{1, \dots, N\}$}
\If {$\text{state}^j = \texttt{"Fin"}$}
\label{step:alg_par_mod_st1}
\State $\texttt{reg}^j \leftarrow$ $(\Theta^j, \mathcal{T}^j, \kappa_{\text{tot}} + \kappa^j, \bar{J}^j, \barbelow{J}^j, \texttt{"CC"}, \eta)$ \label{step:alg_parr_mod_reg1}
\Else
\State $\texttt{reg}^j \leftarrow$ $(\Theta^j, \mathcal{T}^j, \kappa_{\text{tot}} + \kappa^j, \bar{J}^j, \barbelow{J}^j, \texttt{"unFin"}, \eta)$ \label{step:alg_par_mod_st2}
\EndIf
\State Push $\texttt{reg}^j$ to $\mathcal{S}_d$
\label{step:alg_par_mod_st_reg}
\EndFor
\State $\mathcal{S}_{w} \leftarrow$ Pop $\lceil r \  |\mathcal{S}_d| \rceil$  tuples from $\mathcal{S}_d$ \label{step:alg_par_mod_dist}
\ForAll {$k \in \mathbb{N}_{1:| \mathcal{S}_{w}|}$ \textbf{in parallel}} 
\label{step:cert_par_mod_dist_for1}
\State Pop $\texttt{reg}^k$ from $\mathcal{S}_w$ \label{step:pop_reg_dist_mod} 
\State \Longunderstack[l]{ $\mathcal{F}^k  \leftarrow$ \textsc{B\&BCertDynMod}$(\texttt{reg}^k,r)$}\label{step:cert_par_mod_wor} 
%\State  \Longunderstack[l]{Idle $w\in\mathcal{W}$ executes \textsc{B\&BCertDynMod}$(\texttt{reg}^k,r)$}\label{step:cert_par_mod_wor} 
\EndFor
\EndIf
%\EndIf
%\State Collect results from completed workers and add to $\mathcal{F}_d$ 
\State Append $\mathcal{F}^k$ for all $ k$ to $\mathcal{F}_s$ 
\EndWhile
\State \textbf{Return}  $\mathcal{F}_d$
\end{algorithmic}
\end{algorithm}

To summarize Algorithm~\ref{alg:cert_par_mod}, for a selected $\texttt{reg}$  at Step~\ref{step:alg_par_mod_pop}, one of the following three scenarios occurs based on its state: 

\begin{enumerate} [label=\roman*.]
\item \textit{Evaluating cut conditions}: The cut conditions are evaluated within the region using \textsc{cutCert}, and the node is either cut or branched.  The results are then stored in $\mathcal{S}_d$ (Step~\ref{step:alg_par_mod_cut}).
\item \textit{Terminating the region}: If no pending nodes remain within the region, it is terminated and added to $\mathcal{F}_d$ 
(Step~\ref{step:alg_par_mod_pushF}).
\item \textit{Certifying the relaxation and distributing regions}: The relaxation is certified within the region using \textsc{solveCert} (Step~\ref{step:alg_par_mod_cert}), and the results are stored in $\mathcal{S}_d$ (Steps~\ref{step:alg_par_mod_st1}--\ref{step:alg_par_mod_st_reg}). 
Next, a fraction $r$ %$\lceil r   |\mathcal{S}_d| \rceil$  
of regions stored in $\mathcal{S}_d$ is selected 
for distribution among available workers (Step~\ref{step:alg_par_mod_dist}). Each worker then processes its assigned region using the modified  \textsc{B\&BCertDynMod} algorithm (Step~\ref{step:cert_par_mod_wor}).
\end{enumerate}

The updated version of Algorithm~\ref{alg:cert_par_dom} can be obtained by first incorporating these techniques into Algorithm~\ref{alg:cert_ser}, analogously to Algorithm~\ref{alg:cert_par_mod}, resulting in a modified serial algorithm. Consequently, Algorithm~\ref{alg:cert_par_dom} incorporates these strategies by replacing the \textsc{B\&BCert} function at Step~\ref{step:alg_cert_par_dom_wor} with the modified serial algorithm. %\textsc{B\&BCertMod}. 
%To update Algorithm~\ref{alg:cert_par_dom}, these techniques are first incorporated into Algorithm~\ref{alg:cert_ser}, analogously to Algorithm~\ref{alg:cert_par_mod}, resulting in a modified serial algorithm. Consequently, Algorithm~\ref{alg:cert_par_dom} is updated to include these strategies by replacing the \textsc{B\&BCert} function at Step~\ref{step:alg_cert_par_dom_wor} with the modified serial algorithm. %% referred to as the \textsc{B\&BCertMod} function.   

\begin{comment}
\begin{remark} \label{rem:ser_mod_alg}
To minimize peak memory usage, the serial Algorithm~\ref{alg:cert_ser} can be enhanced by incorporating the techniques outlined in Sections~\ref{subsec:early_term_par} and~\ref{subsec:eval_cut_ind_par}. This modified serial algorithm is referred to as the \textsc{B\&BCertMod} function. Consequently, Algorithm~\ref{alg:cert_par_dom} can be updated to include these strategies by replacing the \textsc{B\&BCert} function at Step~\ref{step:alg_cert_par_dom_wor} with the modified serial algorithm (\textsc{B\&BCertMod}).
\end{remark}
\end{comment}

The properties of the modified parallel algorithms follow directly from Theorems~\ref{thr:seq_par_dom}--\ref{thr:seq_par} and Corollaries~\ref{corr:seq_par}--\ref{corr:exact_cert_parr}, with the corresponding modified serial and parallel algorithms substituted accordingly. Furthermore, it is noted that the introduced delays do not alter the sequence of explored nodes.

%% ================================== %%
%% ================================== %%
%%%%%%  NUMERICAL EXPERIMENTS  %%%%%
\section{Numerical experiments}
%%% TODO: ADD BELOW?
In this section, the proposed parallel algorithms were applied to randomly generated MILPs and MIQPs in the form of~\eqref{eq:mpMILP_par} and~\eqref{eq:mpMIQP_par} and an MIQP originating from an MPC application.
%In this section, the proposed parallel algorithms were applied to randomly generated MILPs and MIQPs in the form of~\eqref{eq:mpMILP_par} and~\eqref{eq:mpMIQP_par}, respectively. 
The \textsc{solveCert} procedure utilizes the certification algorithm described in~\cite{arnstrom2021unifying}.  
The numerical experiments were implemented in Julia (version 1.10.5), with computations for higher-dimensional problems conducted on two compute nodes, each equipped with 32 cores and 96 GB of RAM, using resources from the National Supercomputer Centre (NSC)~\cite{NSC}.
%%%at the National Academic Infrastructure for Supercomputing in Sweden (NAISS). % partially funded by the Swedish Research Council through grant agreement no. 2022-06725.  %% and PReSTO funded by the Swedish Research Council through grant agreement no. 2022-06725 (NAISS). % and 2018-06479 (PReSTO). 
%%the computations were enabled by resources provided by the National Academic Infrastructure for Supercomputing in Sweden (NAISS) at [NAISS AFFILIATED SITE] partially funded by the Swedish Research Council through grant agreement no. 2022-06725.

%%% TODO: ADD BELOW
\subsection{Random examples} We first apply the proposed parallel algorithms to randomly generated MILPs and MIQPs in the form of~\eqref{eq:mpMILP_par} and~\eqref{eq:mpMIQP_par}, respectively. The coefficients of these problems were generated as:
%The coefficients of problems~\eqref{eq:mpMILP_par} and~\eqref{eq:mpMIQP_par} were randomly generated as: 
$\bar{H}~\sim~\mathcal{N}(0,1)$, $H~=~\bar{H} \bar{H}^T$, $f~\sim~\mathcal{N}(0,1)$, $f_{\theta}~\sim~\mathcal{N}(0,1)$, %$f \sim \mathcal{N}(0,1)$, 
$c~\sim~\mathcal{N}(0,1)$,
$A~\sim~\mathcal{N}(0,1)$, $b~ \sim~\mathcal{U}([0,2])$, and $W~\sim~\mathcal{N}(0,1)$,  %where $\mathcal{N}$ and $\mathcal{U}$ represent the normal and uniform distributions, respectively. 
where \( \mathcal{N}(\mu, \sigma) \) denotes a normal distribution with mean \( \mu \) and standard deviation \( \sigma \), and \( \mathcal{U}([l, u]) \) represents a uniform distribution over the interval \( [l, u] \).
The parameter set was defined as $\Theta_0 = \{\theta \in \mathbb{R}^{n_{\theta}} | \hspace{.05cm} |\theta_i| \leq 0.5, \forall i \}$. For Algorithms~\ref{alg:cert_par} and~\ref{alg:cert_par_mod}, different values of the ratio $r$ were selected from $\{1, 0.8, 0.4, 0\}$. Note that the case where $r = 0$ corresponds to the serial algorithm. In the experiments, the number of continuous decision variables was set to $n_c = n_b$, the number of constraints to $m = n + 8$ (where $n=n_c+n_b$), and the number of parameters to $n_{\theta} = \left\lceil n_b / 4 \right\rceil$.  For example, the largest problem had dimensions $n_b = 30$, $n = 60$, $m = 68$, and $n_{\theta} = 8$. For each value of $n_b$, $25$ random problems were generated and solved.  Furthermore, for Algorithm~\ref{alg:cert_par_dom}, the number of artificial regions was set to $n_p = n_{\theta}^{\left\lceil n / 4 \right\rceil - 1}$.  

The proposed parallel algorithms were validated by comparing their results to those of the corresponding serial algorithm. A key finding is that, for both MILP and MIQP problem families, the worst-case complexity measures, including the worst-case accumulated number of iterations ($\kappa^I$) and \bnb\ nodes ($\kappa^N$), were identical to those obtained by the serial algorithm across all experiments,  confirming the correctness of the proposed algorithms.  Fig.~\ref{fig:rand_res_uni} illustrates the average worst-case (wc) complexity measures obtained by applying Algorithms~\ref{alg:cert_par_dom} and~\ref{alg:cert_par} to randomly generated mp-MILPs, as a function of $n_b$ for different problem sizes. %, and the number of final regions 
The trend in the figure confirms that larger problem sizes require significantly more iterations and nodes due to the combinatorial nature of the problem, highlighting the importance of parallelization strategies for certifying computationally demanding cases.

\begin{comment} %TODO:add results in a table instead?
As a show-case, Tab.~\ref{tab:cert_res} reports the average of the worst-case (wc) complexity measures as well as the number of final regions (\#$\text{reg}_{\text{s}}$ and \#reg) when applying Algorithms~\ref{alg:cert_par_dom} and~\ref{alg:cert_par} (respectively) to randomly generated mp-MILPs. % with $n_b = 30$. 
\begin{table}[ht]
\centering
\caption{Complexity results for random mp-MILPs} % with $n = 60$} % } %, executed on an Intel 1.8 GHz i7-8565U CPU.}
\scalebox{1}{
\begin{tabular}{c|cccc} %||c%cc}
%\hline %\toprule
$n_b$ & $\kappa^I_{\text{wc}}$ & $\kappa^N_{\text{wc}}$ %$\kappa^I_{\text{avg}}$ & $\kappa^N_{\text{avg}}$
& \#reg & \#$\text{reg}_{\text{s}}$  \\ % & %$\kappa^I_{\text{avg}_{\text{s}}}$ & $\kappa^N_{\text{avg}_{\text{s}}}$ & 
%&$t_\text{cert}$[sec]   \\ %& Mem$_P$ [MB] & Mem$_S$ [MB] \\ %\mathcal{I}
\hline %\midrule
5 & 42 & 16 &  163  & 207 \\ %  
10 & 108 & 47 &  946  & 1153 \\ %  & 87 & 31 &  946  & 96 & 38 &  1153 \\ %  \\ 
20 & 324 & 142 &  14556  & 46579 \\ %&    \\% 142 & 228 & 80 &  14556  &  &  &  46579 \\
30 & 808 & 456 &  101053 & 759258 \\ %&   \\ %808 & 1,075 & 456 &  101053 &  &  & 759258 \\
%\hline %\bottomrule
\end{tabular}}
\label{tab:cert_res}
\end{table}
\end{comment}
\begin{figure}[htbp]  
%\begin{subfigure}{0.2\columnwidth}	
\centerline{		\includegraphics[scale=0.38]{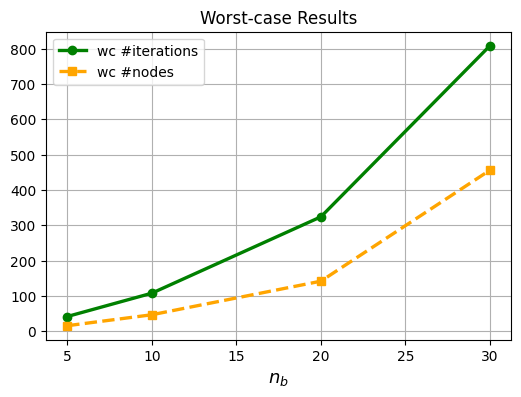}%{img/rand_result.png}
}	%Paper version:[scale=0.38] %Thesis version: [scale=0.5] 
\caption{Average worst-case number of iterations and B\&B nodes as a function of $n_b$ for random experiments.}  
%Complexity results for random mp-MILPs. The left subfigure shows the worst-case number of iterations and B\&B nodes as a function of $n_b$. The right subfigure presents the number of final regions using Algorithms~\ref{alg:cert_par_dom} and~\ref{alg:cert_par} based on $n_b$.}
\label{fig:rand_res_uni}
\end{figure} 

To illustrate the impact of parallel algorithms on computation time, Fig.~\ref{fig:geomean_par} presents the speed-up factor, defined as $\frac{t^s}{t^p}$, where $t^s$ and $t^p$ denote the CPU execution times of the serial and parallel certification algorithms, respectively. These results are based on experiments conducted on a cluster at NSC for random MILPs with $n_b = 30$,  presented as a function of the number of employed workers. For reference, the ideal linear speed-up is also shown.

\begin{figure}[htbp]  
%\begin{subfigure}{0.2\columnwidth}	
\centerline{		\includegraphics[scale=0.28]{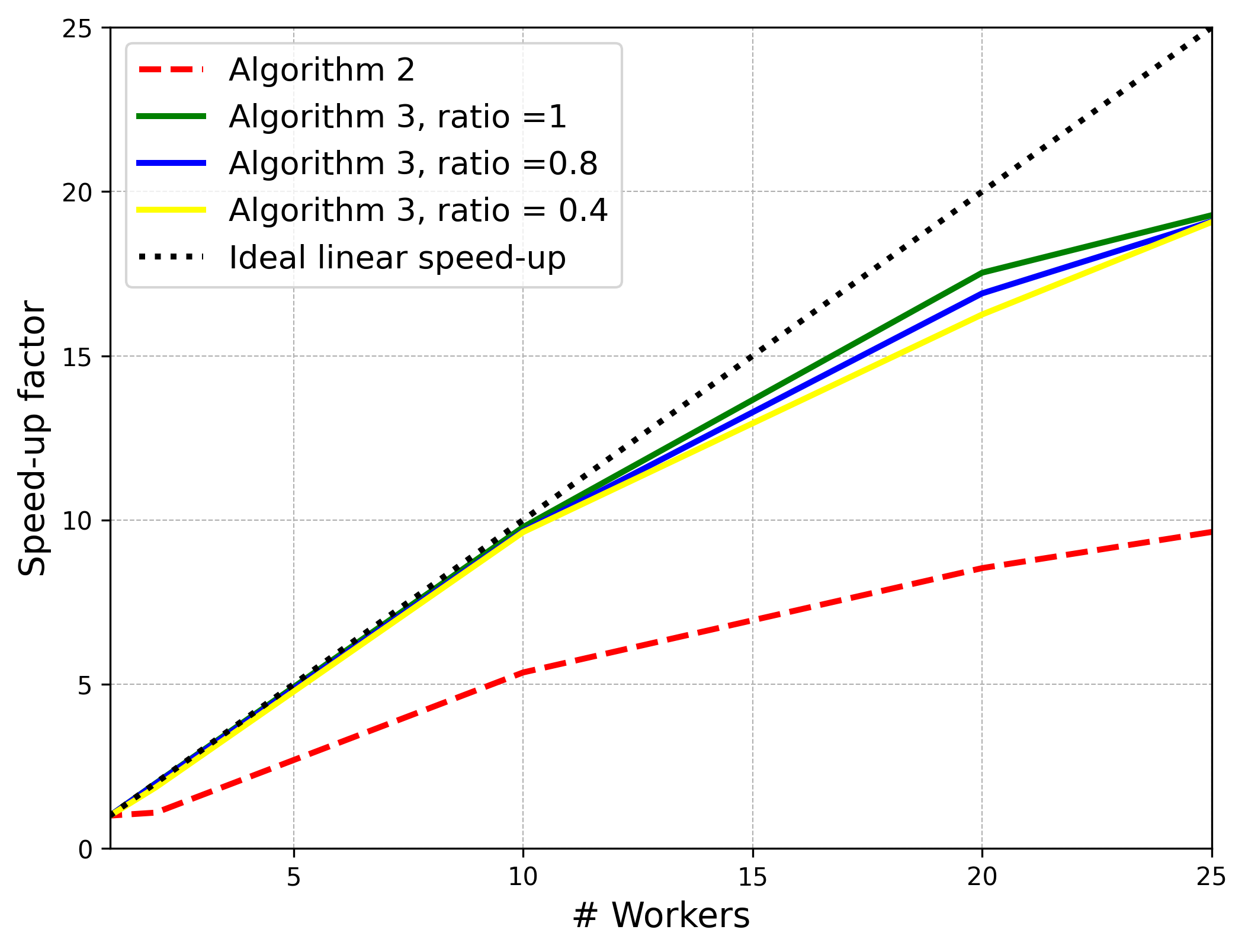}}	%Paper version:[scale=0.28] %Thesis version: [scale=0.37]
\caption{Speed-up factor of Algorithm~\ref{alg:cert_par_dom} (dashed red line) alongside the speed-up factor of Algorithm~\ref{alg:cert_par} for three different ratios (solid lines), compared to the ideal linear speed-up (dotted black line), as a function of the number of workers for random experiments.}  
\label{fig:geomean_par}
\end{figure}  

From Fig.~\ref{fig:geomean_par}, it is clear that the parallel algorithms significantly accelerate computation. For a larger number of workers, however, the speed-up deviates from the ideal trend, highlighting the emergence of communication overhead. In particular, the speed-up factor closely follows the ideal linear trend for fewer workers when using Algorithm~\ref{alg:cert_par} (and its variant, Algorithm~\ref{alg:cert_par_mod}). Furthermore, larger distribution ratios can lead to more consistent speed-up.  
Algorithm~\ref{alg:cert_par_dom}, on the other hand, provides a moderate improvement over Algorithm~\ref{alg:cert_ser} and achieves a lower speed-up compared to Algorithm~\ref{alg:cert_par}. This is primarily due to the extra regions generated from the initial artificial partitioning, which increase the computational workload (see Fig.~\ref{fig:part_par}~(b) and~(d)).  
These results emphasize the importance of dynamic workload distribution to fully leverage the problem structure and improve performance.  

To illustrate the reduction in peak resource consumption achieved by Algorithm~\ref{alg:cert_par_mod}, Fig.~\ref{fig:LengthS_par} shows the number of tuples stored in $\mathcal{S}_d$ (i.e., $| \mathcal{S}_d |$) in the master over the course of Algorithms~\ref{alg:cert_par} and~\ref{alg:cert_par_mod}. In this experiment, the \textsc{solveCert} subroutine was paused after generating a maximum of \( N^{\text{max}} = 100 \) regions. 
Algorithm~\ref{alg:cert_par_mod} generally requires more outer iterations, since in certain iterations, only the cut conditions are evaluated (see Steps~\ref{step:alg_par_mod_st_cc}--\ref{step:alg_par_mod_cut}). In this experiment, the master in Algorithm~\ref{alg:cert_par_mod} performed approximately 100 more outer iterations than Algorithm~\ref{alg:cert_par}, during which \( |\mathcal{S}_d| = 0 \) for Algorithm~\ref{alg:cert_par}. %, as should be seen in the right part of Fig.~\ref{fig:LengthS_par}. 
While the total execution time for both algorithms was approximately the same, Algorithm~\ref{alg:cert_par_mod} exhibited lower memory consumption. This is reflected in the maximum number of tuples stored in $\mathcal{S}_d$, %denoted as $|\mathcal{S}_d|_{\text{max}}$, 
which is around 100 for Algorithm~\ref{alg:cert_par_mod}, compared to approximately 1000 for Algorithm~\ref{alg:cert_par}.  
Notably, the peaks in Fig.~\ref{fig:LengthS_par} correspond to outer iterations in which a relaxation was certified.

\begin{figure}[htbp]  
%\begin{subfigure}{0.2\columnwidth}	
\centerline{	\includegraphics[scale=0.28]{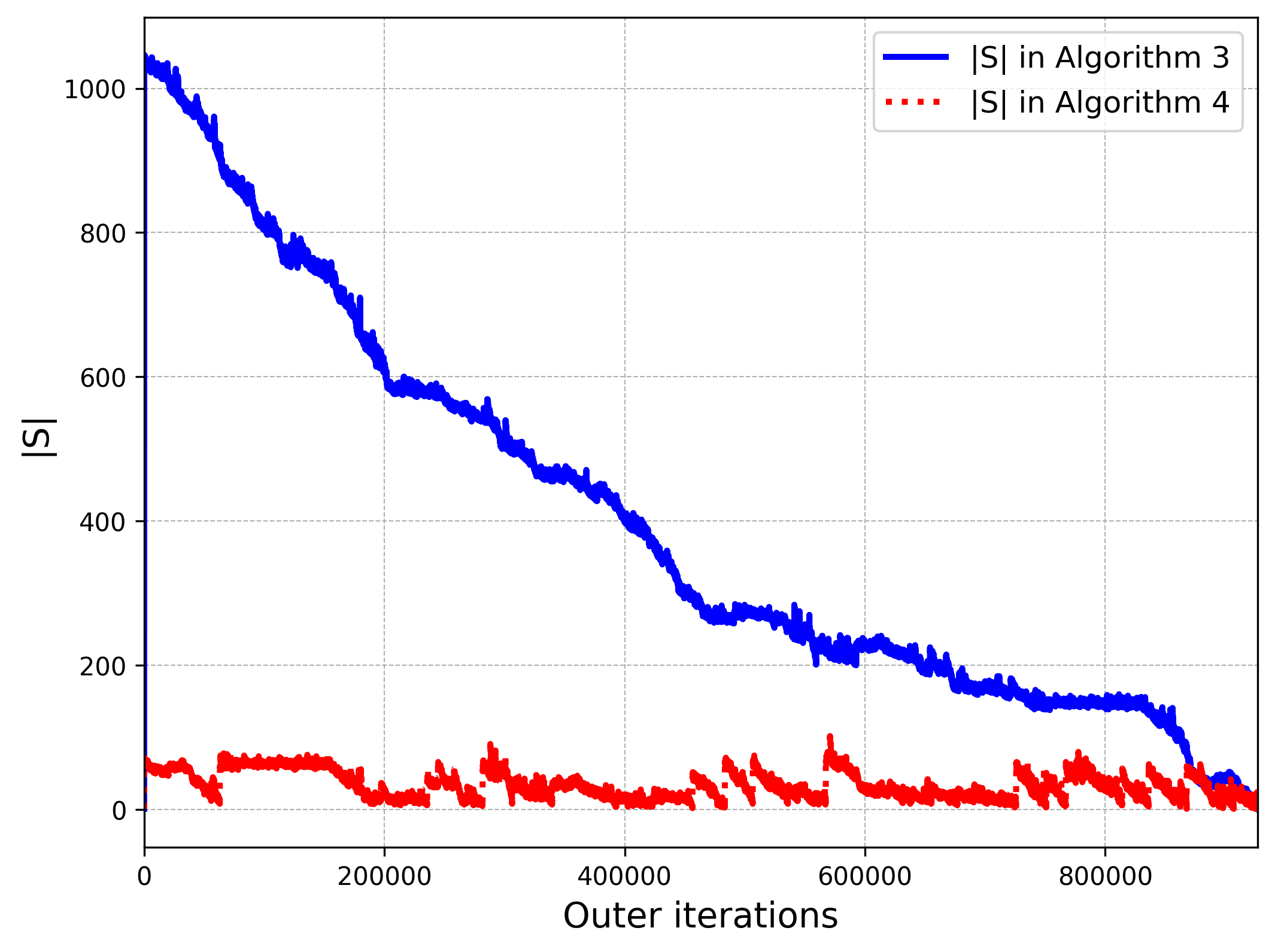}}	%LengthS.png %lenS_MILP21.png
%Paper version:[scale=0.28] %Thesis version: [scale=0.37]
\caption{The number of tuples stored in $\mathcal{S}_d$ in the master during  Algorithms~\ref{alg:cert_par} (solid blue line) and~\ref{alg:cert_par_mod} (dotted red line).} %(a)
%\end{subfigure}
\label{fig:LengthS_par}
\end{figure} 

Finally, to provide visual insights into the results of Algorithms~\ref{alg:cert_par_dom} and~\ref{alg:cert_par}, Fig.~\ref{fig:part_par} presents a two-dimensional slice of the resulting partition for two randomly generated problems, with $n_b = 10$ in (a) and (b), and $n_b = 20$ in (c) and (d).  
Specifically, Fig.~\ref{fig:part_par}~(a) and (c) show the final regions generated using  Algorithm~\ref{alg:cert_par}, resulting in 36 and 2510 regions, respectively, identical to the final partitioning obtained using  Algorithm~\ref{alg:cert_ser}. 
Similarly, Fig.~\ref{fig:part_par}~(b) and (d) illustrate the partitions obtained using Algorithm~\ref{alg:cert_par_dom}, yielding 69 and 2696 regions, respectively.  
The additional partitions in Fig.~\ref{fig:part_par}~(b) and (d) arise from the initial artificial partitioning introduced in Algorithm~\ref{alg:cert_par_dom}.  

%% IMPORTANT: keep below  
\begin{comment} %Thesis version
\begin{figure}[htbp] %[ht]
	\subfloat[]{
		\begin{minipage}[c][1\width]{
				0.5\textwidth}
			\centering
			\includegraphics[scale=0.71]{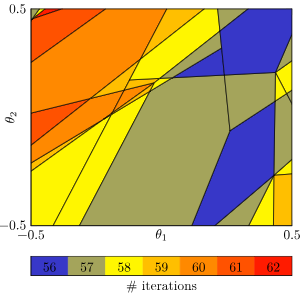}
	\end{minipage}}
	\hfill 
	\subfloat[]{
		\begin{minipage}[c][1\width]{
				0.5\textwidth}
			\centering
			\includegraphics[scale=0.71]{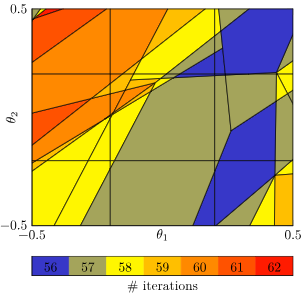}	
	\end{minipage}}
	\hfill 
	\subfloat[]{
		\begin{minipage}[c][1\width]{
				0.5\textwidth}
			\centering
			\includegraphics[scale=0.71]{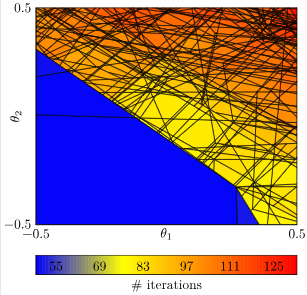}	
			\end{minipage}}
		\hfill 
		\subfloat[]{
		\begin{minipage}[c][1\width]{
				0.5\textwidth}
			\centering
			\includegraphics[scale=0.71]{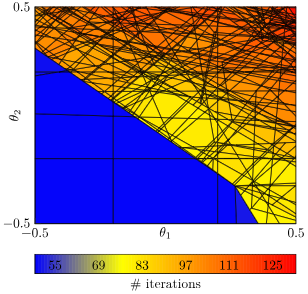}	
			\end{minipage}}
	\hfill 	
	\textbf{\caption{Resulting parameter space for a random example determined by (a) applying the certification Algorithm~\ref{alg:cert_uni}; (b) executing Algorithm \ref{alg:BnB_on_uni} over a deterministic grid in the parameter space.}
	\label{fig:part_par}}
\end{figure}
\end{comment}

%\begin{comment} % PAPER VERSION
\begin{figure}[htbp]

  
%	\begin{subfigure}{0.45\textwidth}	%\columnwidth
\begin{subfigure}{0.47\columnwidth}		
\centerline{
\includegraphics[scale=0.6]{img/MILP3_n20_nb10_nth2_rand_dyn_dom2.png}}	\caption{} %(a) 
\end{subfigure}
\hfill
\begin{subfigure}{0.47\columnwidth}	
\centerline{
\includegraphics[scale=0.6]{img/MILP3_n20_nb10_nth2_rand_stat_dom.png}}	
\caption{} %(b)
\end{subfigure}
%%%%%%%%%%%%%%%%%%%%%%%%%%%%%%%%%%%%%%%%%%%%%%%%%%%%%%%%%%%%%%
\hfill
\begin{subfigure}{0.47\columnwidth}		
\centerline{
\includegraphics[scale=0.6]{img/MILP6_n20_nb10_nth2_rand_dyn_dom_.png}}	\caption{} %(a) 
\end{subfigure}
\hfill
\begin{subfigure}{0.47\columnwidth}	
\centerline{
\includegraphics[scale=0.6]{img/MILP6_n20_nb10_nth2_rand_stat_dom.png}}	
\caption{} %(b)
\end{subfigure}
\caption{Resulting parameter space partitions for two randomly generated examples, determined using Algorithm~\ref{alg:cert_par} in (a) and (c), and Algorithm~\ref{alg:cert_par_dom} in (b) and (d). Results are shown for \(n_b = 10\) in (a) and (b), and \(n_b = 20\) in (c) and (d).}  
\label{fig:part_par}
\end{figure}
%\end{comment}
%%%%%%%%%%%%%%%%%%%%%%%%%%%%%%%%%%%%%%%%% TODO %%%%%%%%%%%%%%%%%%%%%%%%%%%%%%%%%%%%%%%%%%%%%%%%%%%%%%%%%%%%%%%%%%%%%%%%
%%%%%% IMPORTANT:ADD BELOW 
%\begin{comment} %TODO: add below
\subsection{MPC application}
We now apply the proposed methods to a hybrid MPC problem involving a linearized inverted pendulum on a cart, constrained by a wall, as used in, e.g.,~\cite{marcucci2020warm,shoja2025unifying} (see Fig.\ref{fig:pend_par}). The objective is to stabilize the pendulum at the origin ($z_1 = 0$) while keeping it upright  ($z_2 = 0$). The control inputs consist of a force applied to the cart ($u_1$) and a contact force from wall interaction ($u_2$), introducing binary variables into the model. The state and input constraints follow those in\cite{marcucci2020warm,shoja2025unifying}.
The hybrid MPC controller employs a quadratic performance measure with a prediction/control horizon of $N$. The system’s initial state vector serves as the parameter vector $\theta$ in the multi-parametric setting, with the parameter set determined by the state constraints. The resulting mp-MIQP, formulated following~\cite{bemporad2002model}, includes $n = 4N$ decision variables, of which $n_b = 2N$ are binary. The weight matrices and dynamics are consistent with those in~\cite{marcucci2020warm}. For further details, see~\cite{marcucci2020warm}.  

%The worst-case complexity measures obtained from the certification algorithm were recorded as $\kappa^I_{\text{wc}} = 483$ and $\kappa^N_{\text{wc}} = 28$. Furthermore, the certification time using Algorithm~\ref{alg:cert_ser} was $12$ hours, but it decreased to $4.5$ and $2.5$ hours, respectively, when Algorithms~\ref{alg:cert_par_dom} and~\ref{alg:cert_par} were employed. 

\begin{figure}[htbp]  
%\begin{subfigure}{0.2\columnwidth}	
\centerline{	\includegraphics[scale=0.23]{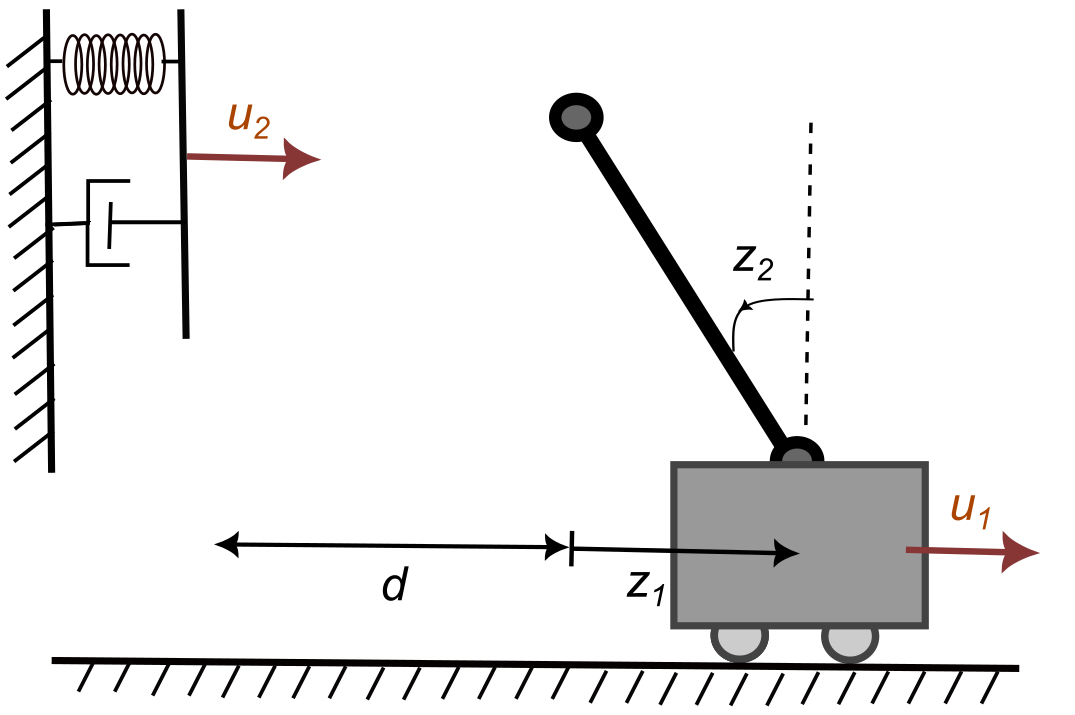}} %paper version: [scale=0.23] %Thesis version: [scale=0.28]		
\caption{Regulating the inverted pendulum on a cart with contact forces.} %(a)
%\end{subfigure}
\label{fig:pend_par}
\end{figure}

Table~\ref{tab:res_pend_par} summarizes the complexity certification results for the (serial) certification Algorithm~\ref{alg:cert_ser} and the parallel certification Algorithms~\ref{alg:cert_par_dom} and~\ref{alg:cert_par} executed using $10$ workers, as a function of the prediction horizon $N$. The table reports the number of constraints ($m$), the worst-case computational complexity in terms of iterations ($\kappa^I_{\text{wc}}$) and B\&B nodes ($\kappa^N_{\text{wc}}$), along with the CPU time required for certification (in seconds). The results highlight the impact of parallelization in reducing computation time. Specifically, as the problem size increases (i.e., as $N$ grows), the parallel algorithms outperform the serial one in terms of certification time, while maintaining correct results.

\begin{comment}
\begin{table}[]
\caption{The worst-case number of iterations and B\&B nodes, and certification time %, and memory consumption 
resulting from applying Algorithms~\ref{alg:cert_ser}--\ref{alg:cert_par} for the MPC controller with different prediction horizons}
\label{tab:res_pend_par}
\centering
\scalebox{1}{
\begin{tabular}{cc|cc|ccc}  %|ccc 
\multicolumn{2}{c|}{prob. dim.}  & \multicolumn{2}{c|}{wc. complexity}            & \multicolumn{3}{c}{cert. time [sec]}              % &\multicolumn{3}{c}{mem. consumption [MB]}             
\\ \hline
$N$ & $m$ & $\kappa^I_{\text{wc}}$ & $\kappa^N_{\text{wc}}$ & $t^{\text{ser}}$ & $t^{\text{stat}}$ & $t^{\text{dyn}}$ \\ \hline
1   & 39  & 46                     & 4   & 6.4        & 2           & 5                        \\ 2   & 78  &  96   & 8  & 1407        & 367         & 389     \\ 3  &  117   &    280                    &  19                      &                &                  &   \\
%1   & 29  &   6  &  &   4     &   6      &     2                    \\2   & 77  &     &  &        &         &                         \\3   &   &     &  &        &         &                         \\
4   &   &     &  
&        &         &                         \\

5   &  &    &   
&        &          &                        % \\             
\end{tabular}
}
\end{table}
\end{comment}

\begin{table}[]
\caption{The number of constraints ($m$) in the resulting mp-MIQPs for the inverted pendulum example, along with worst-case complexity results and the computation time of the Julia implementation for Algorithm~\ref{alg:cert_ser} ($t^{\text{ser}}$), Algorithm~\ref{alg:cert_par_dom} ($t^{\text{stat}}$), and Algorithm~\ref{alg:cert_par} ($t^{\text{dyn}}$) across different prediction horizons ($N$).} %{Dimensions of the resulting mp-MIQPs for the inverted pendulum example, worst-case complexity results, and the time taken by the Julia implementation of Algorithms~\ref{alg:cert_ser} ($t^{\text{ser}}$), Algorithms~\ref{alg:cert_par_dom} ($t^{\text{stat}}$), and Algorithms~\ref{alg:cert_par} ($t^{\text{dyn}}$).} %The worst-case number of iterations and B\&B nodes, and certification time resulting from applying Algorithms~\ref{alg:cert_ser}--\ref{alg:cert_par} for the MPC controller with different prediction horizons}
\label{tab:res_pend_par}
\centering
\scalebox{1}{
\begin{tabular}{cc|cc|ccc}  %|ccc 
\multicolumn{2}{c|}{prob. dim.}  & \multicolumn{2}{c|}{wc. complexity}            & \multicolumn{3}{c}{cert. time [sec]}              % &\multicolumn{3}{c}{mem. consumption [MB]}             
\\ \hline
$N$ & $m$ & $\kappa^I_{\text{wc}}$ & $\kappa^N_{\text{wc}}$ & $t^{\text{ser}}$ & $t^{\text{stat}}$ & $t^{\text{dyn}}$ \\ \hline
1   & 19  & 9   &   4   & 1.4  & 0.93    & 0.48    \\ 
2   & 38  &  22   & 8 & 6.2  & 3.2        & 1.9     \\ 
3  &  57   &    76     &  47  &  44.6     &     20.3    &    11.1   \\
4   & 76  &   134  &  68
&   122     &   46.4      &  30.1                      \\
5     &  95  & 198  & 76   &   635 &  201.3  &   119.4   \\    
6    & 114 & 250  & 78   &   1311.6 &   326.9   &   212.7                     % \\ 
\end{tabular}
}
\end{table}

\begin{comment}
\begin{table}[]
\caption{The worst-case number of iterations and B\&B nodes, and certification time %, and memory consumption 
resulting from applying Algorithms~\ref{alg:cert_ser}--\ref{alg:cert_par} for the MPC controller with different prediction horizons}
\label{tab:res_pend_par}
\centering
\scalebox{0.88}{
\begin{tabular}{cc|cc|ccc|ccc}   \multicolumn{2}{c|}{prob. dim.}  & \multicolumn{2}{c|}{wc. complexity}            & \multicolumn{3}{c|}{cert. time [sec]}              & \multicolumn{3}{c}{mem. consumption [MB]}             \\ \hline
$N$ & $m$ & $\kappa^I_{\text{wc}}$ & $\kappa^N_{\text{wc}}$ & $t^{\text{s}}$ & $t^{\text{d}}$ & $t^{\text{p}}$ & $m^{\text{s}}$ & $m^{\text{d}}$ & $m^{\text{p}}$ \\ \hline
1   & 39  & 46                     & 4 % 30.9 & 4.79 & 707  & 2.43 sec % & 45                     & 4   & 17.6  & 3.32 & 1719                          
& 6.4              & 5             & 2              & 2,834             & 117               & 202               \\
%2   & 78  & 113                    & 8                     & 1407             & 389              & 367              & 1.4$\times10^6$   & 854               & 772               \\
2   & 78  &  96   & 8   % & 66.24   & 7.75 &  16259  & 112
& 1407             & 389              & 367              & 1.4$\times10^6$   & 854               & 772               \\
3  &  117   &    280                    &  19                      &                &                  &                  &                   &                   &   %170               &  14.5                  
\end{tabular}
}
\end{table}
\end{comment}
%% ================================== %%
%%%%%%   CONCLUSION  %%%%%
\section{Conclusion}
%This paper presents parallel domain-decomposition algorithms for certifying the complexity of B\&B-based MILP and MIQP solvers for the family of mp-MILPs and mp-MIQPs. 
This paper presents parallel versions of the complexity certification framework for B\&B-based MILP and MIQP solvers applied to the family of mp-MILPs and mp-MIQPs, extending its applicability to larger and more challenging problem instances. The proposed algorithms employ both static and dynamic domain decomposition to distribute computational workloads across available processing resources.
To address the challenge of peak memory consumption in computationally demanding problems, two complementary strategies are introduced and integrated into the parallel algorithms. These enhancements improve the scalability of the method, even for more memory-intensive problems. The parallel algorithms enable the use of HPC resources, facilitating the certification of larger and more complex problem instances.
Numerical experiments demonstrate significant reductions in computation time while preserving the correctness of the certification results. Furthermore, the introduced memory-handling strategies are shown to significantly reduce peak memory consumption. Future work will focus on further refining the implementation, including parallelizing the  \textsc{solveCert} subroutine to further distribute the computational workload. 
%This development provides a tool for certifying the computational complexity of \bnb algorithms for more computationally demanding problems.
%It was demonstrated that the results obtained from the parallel algorithms are identical to those of their serial counterparts, thereby validating the proposed methods.
%% ================================== %%		
        %\input{introduction_par}         %\input{prob_form_par}
        %\input{complex_cert_seri}
        %\input{complex_par}
        %%\input{complex_cert_par}
        %\input{prop_cert_par}
        %%\input{implemen_par}
        %\input{Improvement}
        %%\input{practical_aspects} 
        %\input{num_ex_par}
	%\input{conclusion_par}
		
        %\input{acknowledgment}
		%\linespread{0.97}\selectfont %to decrease space in references
		\bibliography{IEEEabrv,references} % References %
		%\input{Appendix1}

	%%%%%%%%%%%%%%%%%%%%%%%%%%%%%%%%%%%%%%%%%%%%%%%%%%%%%%%%%%%%%%%%%%%%%%%%%%%%%%%%%%%
\end{document}